\newtheorem{theorem}{Theorem}
\newtheorem{corollary}{Corollary}
\newtheorem{lemma}{Lemma}
\newtheorem{remark}{Remark}
\newtheorem{example}{Example}
\newtheorem{defn}{Definition}
\newcommand{\E}{\mathrm{E}}
\newcommand{\ds}{\displaystyle}
\newcommand{\st}{\mathrm{st}}
\newcommand{\PQD}{\mathrm{PQD}}
\newcommand{\Var}{\mathrm{Var}}
\newcommand{\Cov}{\mathrm{Cov}}
\newcommand{\sys}[1]{\langle #1\rangle}
\newcommand{\bms}[1]{\mbox{\scriptsize$\bm{#1}$}}
\begin{document}

\title{Effects of the Auto-Correlation of Delays on the Age of
Information: A Gaussian Process Framework}

\author{Atsushi Inoie and Yoshiaki Inoue
\thanks{This work was supported in part by JSPS KAKENHI Grant Number 24K14839.

A.\ Inoie is with 
Department of Information Network and Communication,
Kanagawa Institute of Technology, Atsugi-city 243-0292, Kanagawa, Japan. 
E-mail: inoie@nw.kanagawa-it.ac.jp.%

Y.\ Inoue is with 
Department of Information and Communications Technology, 
Graduate School of Engineering, The University of Osaka, Suita 565-0871, Japan.
E-mail: yoshiaki@comm.eng.osaka-u.ac.jp.}}%

\maketitle
\IEEEpeerreviewmaketitle

\begin{abstract}
The age of information (AoI) has been studied actively in recent years
as a performance measure for systems that require real-time
performance, such as remote monitoring systems via communication
networks. 
The theoretical analysis of the AoI is usually formulated based on
explicit system modeling, such as a single-server queueing model. 
However, in general, the behavior of large-scale systems such as
communication networks is complex, and it is usually difficult to
express the delay using simple queueing models. 
In this paper, we consider a framework in which the sequence of delays
is composed from a non-negative continuous-time stochastic process, called
a virtual delay process, as a new modeling approach for the theoretical
analysis of the AoI. 
Under such a framework, we derive an expression for the transient
probability distribution of the AoI and further apply the theory of
stochastic orders to prove that the high dependence of the sequence of
delays leads to the degradation of AoI performance. 
We further consider a special case in which the sequence of delays is
generated from a stationary Gaussian process, and we discuss the
sensitivity of the AoI to second-order statistics of the delay process
through numerical experiments.
\end{abstract}

\section{Introduction}\label{sec:intro}
The rapid advancement of information and communication technologies in
recent years has enabled new applications that rely on real-time data
processing. In particular, the widespread adoption of small and
cost-effective Internet of things (IoT) devices, facilitated by
miniaturization and low-power wireless technologies, has accelerated
the collection of massive amounts of data. By integrating these
large-scale data streams with artificial intelligence (AI) inference
and edge computing, future systems are expected to enable diverse
decision-making and control mechanisms based on the latest data in
real time.

Reflecting these trends, modern information and communication systems
are no longer limited to serving as ``dumb pipes'' that merely
transmit data. Instead, they function as dynamic platforms for
real-time information sharing, especially in applications like IoT
monitoring systems and intelligent transportation systems (ITS). In
these domains, timely and accurate information delivery significantly
enhances operational efficiency and safety. For instance, IoT
monitoring systems must handle continuous real-time data from sensors
deployed in smart environments, ranging from homes and factories to
public infrastructure, to ensure reliable oversight and immediate
responses. Similarly, ITS can leverage real-time data from vehicles
and roadside units to optimize traffic flow, reduce congestion, and
improve overall safety.

In such applications, the \textit{freshness of information} is a
critical factor that directly influences monitoring accuracy and
control responsiveness, ultimately affecting overall system
performance. Traditional performance metrics, such as throughput and
delay, do not fully capture the freshness requirement that underpins
real-time operations. To address this gap, the age of information
(AoI) has emerged as an essential metric for quantifying information
freshness \cite{Kaul11, Kaul12-1, Kosta2017, Yates2021}. The AoI
measures how much time has elapsed since the latest received update
was generated at the source, making it a direct indicator of
timeliness.

\begin{figure}[t]
\centering
\includegraphics[scale=0.4]{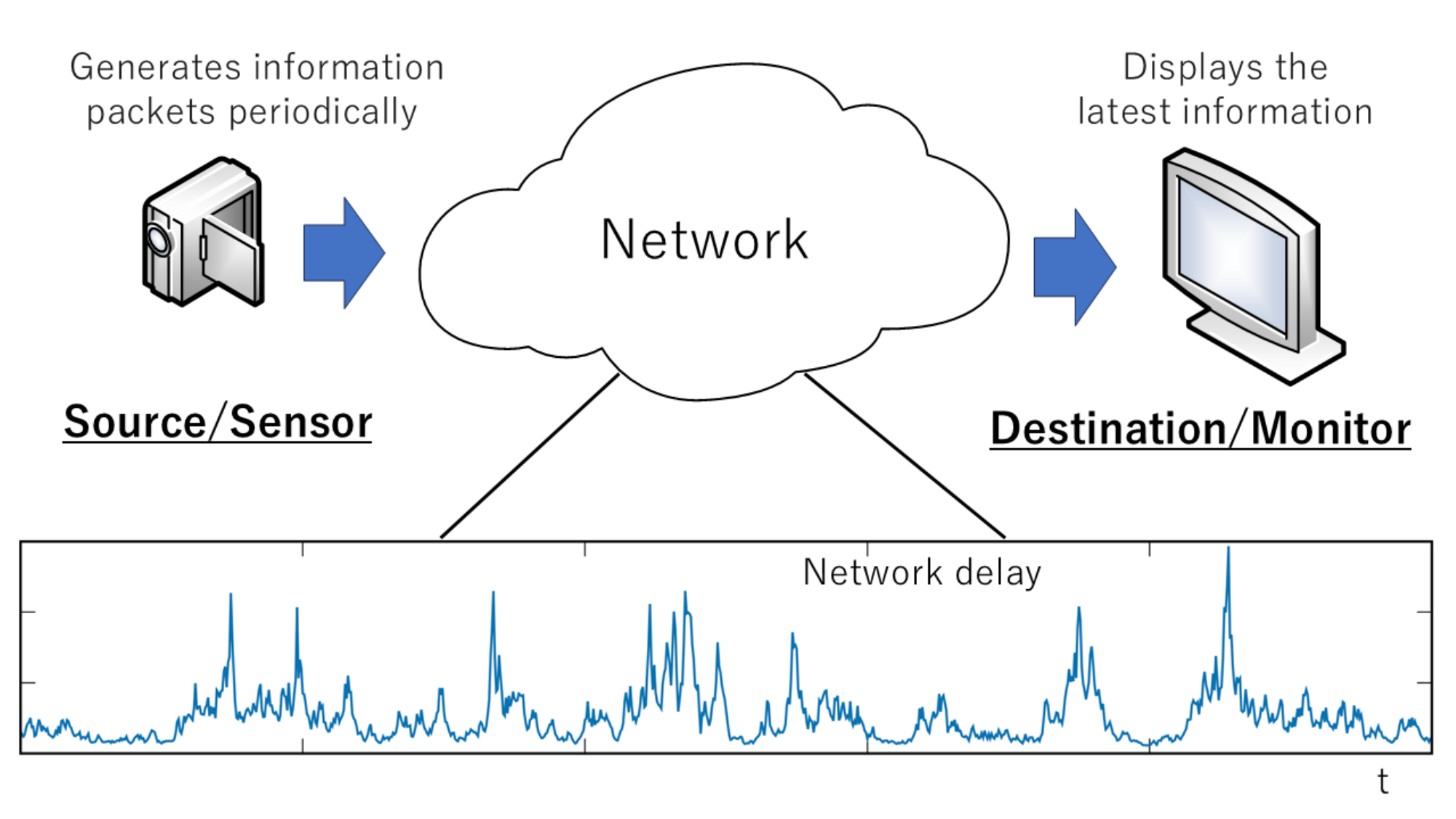} 
\caption{An example of a monitoring system.}
\label{fig:monitoring}
\end{figure}

Intuitively, the AoI depends on two key factors: the frequency of
information generation at the source and the end-to-end delay incurred
while transmitting information from the source to the monitor (Fig.~\ref{fig:monitoring}). When
the information generation rate is too low, the monitor may remain
idle for extended periods without new updates, leading to high AoI
(low freshness). Conversely, even if the generation rate is
sufficiently high, large delays cause updates to become stale before
reaching the monitor, thereby keeping the AoI at high values.

To analyze the AoI theoretically, we typically model the intergeneration
intervals $(G_n)_{n=0,1,\ldots}$ and the delay sequence
$(D_n)_{n=0,1,\ldots}$ as stochastic processes. The sequence $(G_n)_{n=0,1,\ldots}$
is often assumed to be independently and identically distributed
(i.i.d.) with nonnegative values, which is natural in sensing systems
(e.g., periodic sampling or Poisson sampling). However, modeling
$(D_n)_{n=0,1,\ldots}$ usually follows one of two approaches. The first considers
updated information queued in a single-server system until it is
received by the monitor
\cite{Kaul12-1,Costa16,Kam18,Bedewy19-1,Bedewy19-2,Inoue19,Yates19},
capturing congestion effects due to high update rates. The second
treats $(D_n)_{n=0,1,\ldots}$ as an i.i.d. sequence independent of
$(G_n)_{n=0,1,\ldots}$ \cite{Kam16,Yates2017,Inoue2021,Inoue2023}, 
which is often employed when analyzing scheduling or resource allocation problems
\cite{He16,Sun17-2,Tripathi17,Hsu20,Jiang18,He18,Kadota18,Kadota19,Li19,Yin19}.

Despite their utility, both approaches have potential limitations.
First, accurately capturing delay in large-scale networks via simple
queuing models can be challenging, as real-world systems often exhibit
complex behavior. Second, assuming i.i.d. delays and ignoring
autocorrelation can lead to overoptimistic performance estimates. For
example, in the i.i.d. model, making the intergeneration intervals
arbitrarily small may unrealistically suggest an infinitely small AoI.

To address these issues, we propose a framework in which the delay
sequence $(D_n)_{n=0,1,\ldots}$ is derived from a continuous-time nonnegative
stochastic process, referred to as a \textit{virtual delay process}. 
This framework generalizes the second approach above by allowing for
temporal dependence in delay. 
More specifically, the main contributions of this paper are summarized as follows:

\begin{itemize}
\item \textbf{New Modeling Framework:} We introduce a novel modeling framework in which the sequence of delays is generated by a nonnegative continuous-time stochastic process, referred to as a virtual delay process. Unlike existing approaches that rely on either single-server queueing or i.i.d.\ delay assumptions, this framework accommodates arbitrary dependence structures in the delay sequence.

\item \textbf{Analysis of the AoI:} Under our proposed model, we
derive the transient probability distribution of the AoI. 
Furthermore, by leveraging stochastic orders, we prove that stronger temporal dependence in the delay sequence degrades AoI performance. This result underscores the limitations of i.i.d.\ delay assumptions, which may be overly optimistic for large-scale or heavily correlated environments.

\item \textbf{Stationary Gaussian Case:} 
As a concrete example, we specialize our framework to the case in
which the sequence of delays is generated from a stationary Gaussian process. 
Through theoretical analysis and numerical evaluation, we explore how
the autocorrelation structure of the Gaussian process influence AoI.
\end{itemize}

This rest of the paper is organized as follows. Section~\ref{sec:relatedwork} provides a review of related work. Section~\ref{sec:model} presents our continuous-time delay modeling framework. In Section~\ref{sec:transient}, we derive the transient distribution of the AoI and analyze its properties. Section~\ref{sec:gaussian} discusses the stationary Gaussian case, and Section~\ref{sec:num} shows numerical experiments. Finally, we conclude this paper in Section~\ref{sec:conc}.

\section{Related Work}
\label{sec:relatedwork}

The study of the AoI has evolved into two major
research directions. The first direction, \emph{AoI analysis}, focuses
on deriving explicit expressions for the AoI (e.g., the mean and distribution) for
various system models, often leveraging queueing theory. The second,
\emph{AoI minimization}, is concerned with optimizing the AoI through
dynamic scheduling, sampling, or resource allocation mechanisms. While
conceptually distinct, these directions are tightly coupled in many
works, including the present one, which contributes to both the
theoretical analysis of the AoI and the development of
performance-improving insights.

\smallskip

\noindent
\textbf{AoI Analysis:}
Initial works in the AoI analysis considered single-source systems with
memoryless interarrival and/or service times. Kaul et al.\ derived
expressions for the average AoI in M/M/1, M/D/1, and D/M/1 queues under
first-come first-served (FCFS)~\cite{Kaul12-1}, and extended their analysis to last-come first-served (LCFS) 
policies~\cite{Kaul12-2}. Costa et al.~\cite{Costa16} demonstrated the
benefits of packet management under LCFS, where outdated packets are
discarded to improve freshness.

The role of service discipline in the performance of the AoI has since
been studied in depth. LCFS policies, especially those with
preemption, are known to achieve superior freshness in high-traffic
regimes. Bedewy et al.\cite{Bedewy19-2} examined multihop networks and
demonstrated that last-generated first-served (LGFS) scheduling strategies can achieve strong
AoI performance under certain service time distributions. 
Kam et al.~\cite{Kam18} analyzed systems with packet deadlines,
identifying scheduling policies that balance timeliness and complexity.
Inoue et al.~\cite{Inoue19} developed a general framework for
computing the stationary distribution of the AoI and analyzed several
single-server queues, including both FCFS and LCFS disciplines. Champati et
al.\ \cite{Campati19} analyzed single-server queues with generally
distributed inter-generation and service times, and derived bounds for
the AoI distribution under both FCFS and LCFS service discipline.

As attention shifted to multi-source systems, analytical challenges
grew due to interactions between competing update streams. Yates and
Kaul studied a two-stream M/M/1 queue~\cite{Yates19}, and their
results were extended to general service times in M/GI/1 models by
Moltafet et al.~\cite{Moltafet20} and Inoue and Takine~\cite{Inoue2024}.
Also, the GI/GI/1 queue with coexisting M/GI stream has been analyzed
by Inoue and Mandjes~\cite{Inoue2025}.
Chen et al.~\cite{Chen2022} addressed the
multi-stream M/G/1/1 case, highlighting the importance of buffer
constraints. Jiang and Miyoshi~\cite{Jiang2021} developed transform
techniques for the joint analysis of the AoI of multiple sources under
pushout discipline, while Liu et al.~\cite{Liu2021} offered robust
worst-case guarantees on the peak AoI using uncertainty sets.

Some works also extended the analysis to multi-server systems. Kam et
al.~\cite{Kam16} examined the AoI in M/M/2 and M/M/$\infty$ queues,
exploring how service parallelism improves information freshness.
Yates~\cite{Yates2020} developed a general stochastic hybrid system
(SHS) approach for analyzing the moments and distribution of the AoI
in network systems formulated as finite-state Markov chains.

Despite this progress, much of the existing literature relies on
simplifying assumptions such as exponential service times or Poisson
arrivals. Queueing
models, particularly single-server systems with FCFS or LCFS
disciplines, have served as the predominant analytical framework in
AoI research. Within this framework, packet delays are determined
implicitly by the service dynamics and queue discipline.

Alternatively, several studies adopt models in which the delay
sequence is represented as an i.i.d.\ process, independent of the
generation intervals~\cite{Kam16,Yates2017,Inoue2021,Inoue2023}. This
abstraction is often employed to facilitate tractable analysis and
optimization, especially in the design of sampling and scheduling
strategies under dynamic conditions. However, such models fail to
capture temporal dependencies in delays, which are common in systems
experiencing congestion or bursty traffic.

Even in classical queueing systems, delay processes can exhibit
autocorrelation as a consequence of queue buildup. But in such models,
the correlation is a resulting property rather than a controllable
parameter of models. As a result, existing models offer limited flexibility in
systematically exploring how delay dependence impacts the performance of the AoI.

\smallskip

\noindent
\textbf{AoI Minimization:}
Complementary to analysis-focused works, a significant line of
research seeks to minimize the AoI through system-level optimization.
He et al.~\cite{He18} analyzed the complexity of an AoI-oriented link
scheduling problem and introduced a mixed integer programming
formulation. Kadota et al.~\cite{Kadota18,Kadota19} proposed
scheduling algorithms for wireless networks, employing max-weight and
Whittle-index-based heuristics to balance throughput and freshness. 
Hsu et al.~\cite{Hsu20} considered the scheduling problem with random
arrivals and developed Whittle-index-based policies through
a Markov decision process (MDP) formulation.
Jiang et al.~\cite{Jiang18} investigated status updating over wireless
multiple-access channels with minimal coordination, while Li et
al.~\cite{Li19} developed general optimization frameworks for
heterogeneous edge networks. Yin et al.~\cite{Yin19} introduced a
concept of effective AoI, which focuses on update freshness at
request times, rather than continuously.

Joint sampling and scheduling has also been addressed. Sun et
al.~\cite{Sun2018} examined AoI-optimal update strategies across
multiple flows. Bedewy et al.~\cite{Bedewy2021} developed dynamic
programming-based approaches to simultaneously optimize sampling and
transmission timing. In energy-constrained networks, particularly
those powered by energy-harvesting sensors, a large body of
work~\cite{Bacinoglu15,Yates15,Arafa17,Bacinoglu17,Wu18,Arafa18,
Bacinoglu18,Farazi18, Hirosawa2020}
developed strategies that jointly balance energy usage and timeliness.

Recent efforts have expanded the scope of AoI optimization. Li et
al.~\cite{Li2022} addressed scheduling under AoI threshold
constraints, while Maatouk et al.~\cite{Maatouk2022} studied
lexicographic optimality in prioritized networks. Zakeri et
al.~\cite{Zakeri2024} incorporated deep reinforcement learning to
design policies that adapt to stochastic environments without full
model knowledge.

Nevertheless, these models often assume i.i.d.\ delays or
deterministic queueing behavior, limiting their realism in scenarios
with congestion that exhibits temporal correlation. This gap
motivates the need for models that explicitly handle and parameterize
delay dependence.

\smallskip

\noindent
\textbf{Contribution of This Paper:}
This paper addresses this gap by proposing a general framework in which
the delay sequence is modeled as a non-negative continuous-time
stochastic process. Unlike traditional models that assume independent
or memoryless delays, our framework captures temporal correlation and
allows for direct control over its strength. 
Also, unlike queueing-based approaches where correlation arises implicitly
from system dynamics, our model treats delay correlation as a tunable
property. This enables a systematic exploration of delay dependence
and provides insights beyond conventional single-server queues or i.i.d.\
delay settings. 

We derive exact characterizations of the time-dependent distribution
of the AoI, and analyze how increased delay dependence degrades the
AoI performance by means of the stochastic 
ordering techniques~\cite{shaked2006}.
We also show how the derived results apply to Gaussian
processes, revealing the sensitivity of the AoI to second-order statistics
of the delay process.

\section{System model}\label{sec:model}
Consider a system consisting of one sender (source/sensor) node and one receiver (destination/monitor) node. 
The sender node has a time-varying source of information, and the
receiver node has a monitor to display the information from the sender node. 
The sender node generates packets with a constant time interval $\tau$ ($\tau > 0$), and transmits the packets to the receiver node. 
The packets generated by the sender node are used for updating the information displayed at the receiver node. 

Let $\alpha_n := n\tau$ ($n = 0,1,\ldots$) denote the generation time of $n$-th packet, and
let $\beta_n = \alpha_n + D_n$ denote the arrival time of $n$-th
packet to the receiver node, where $D_n$ denotes 
a non-negative random variable representing the system delay
experienced when transmitting a packet from the sender node to the
receiver node. 
By definition of $\alpha_n$, we have $\alpha_0=0$. 
Recalling that $G_n$ denotes the interval of generation times between
$(n-1)$-th and $n$-th packets, we have $G_n=\alpha_n-\alpha_{n-1}=\tau$. 

In this paper, we consider the following framework: the sequence
$(D_n)_{n=0,1,\ldots}$ of system delays is characterized as
\begin{equation}
D_n = X_{\alpha_n} = X_{n\tau},
\quad
n = 0,1,\ldots,
\label{eq:D_n-by-X}
\end{equation}
where $(X_t)_{t \in [0,\infty)}$ denotes a non-negative continuous-time
stochastic process. In other words, the delay is given by the instantaneous value
of the process $(X_t)_{t \in [0,\infty)}$ at each generation time
$\alpha_n = n\tau$. 
We refer to $(X_t)_{t \in [0,\infty)}$ as {\it a virtual delay process}. 

\begin{remark}
In this framework, the virtual delay time process $(X_t)_{t \in
[0,\infty)}$ can be affected by generation events of packets. 
For example, in an FCFS queueing model, $(X_t)_{t \in [0,\infty)}$ 
is given by a stochastic process with upward jumps, where each jump size
corresponds to the service time of the packet, and it changes
its value at generation instants of packets.
\end{remark}
Let $\mathcal{I}_t := \{n;\, n=0,1,\ldots,\, \beta_n \leq t\}$ denote
the set of indices of packets arrived at receiver node until time $t$
($t \geq 0$) and let $J_t := \sup \mathcal{I}_t$ denote the  index of
the latest packet in $\mathcal{I}_t$. For convenience, we 
let $\sup\emptyset:=-\infty$ for the empty set $\emptyset$, 
so that we have $J_t = -\infty$ when $\mathcal{I}_t = \emptyset$. 

The AoI $A_t$ at time $t$ is then given by
\begin{equation*}\label{def:aoi}
A_t = t - \alpha_{J_t},
\quad
t \geq 0, 
\end{equation*}
where we define $\alpha_{-\infty} := -\infty$, so that the AoI is
defined as $A_t = \infty$ before the first arrival of a packet at the receiver node. 
The top panel of Fig.\ \ref{fig:exam_sample_path} shows sample paths
of $(X_t)_{t\ge 0}$ and $(D_n)_{n=0,1,\ldots}$,
while the bottom panel of Fig.\ \ref{fig:exam_sample_path} shows
a sample path of $(A_t)_{t\ge 0}$. 
\begin{figure}[t]
\centering
\includegraphics[scale=1.1]{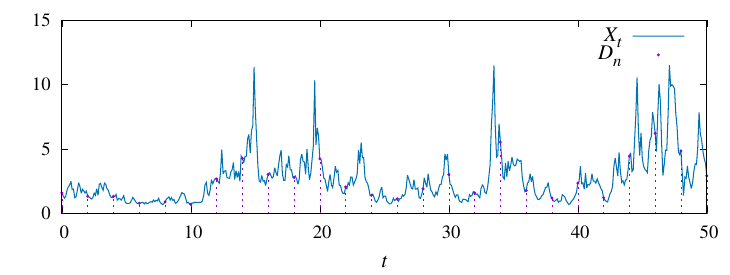} 
\includegraphics[scale=1.1]{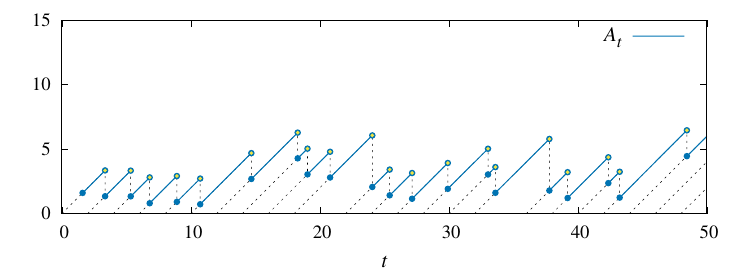} 
\caption{Top: An example of sample paths of $(X_t)_{t\ge 0}$ and
$(D_n)_{n=0,1,\ldots}$. Bottom: An example of a sample path of $(A_t)_{t\ge 0}$.
}\label{fig:exam_sample_path}
\end{figure}

\section{Transient analysis}\label{sec:transient}
Let $k_t$ denote the index of the latest packet generated before time $t$, 
and let $\phi_t$ denote the time elapsed
since the generation of the $k_t$-th packet:
\begin{equation}
k_t = \left\lfloor \frac{t}{\tau}\right\rfloor,
\;\;
\phi_t = t - k_t \tau,
\quad
t \geq 0. 
\label{eq:k_t-phi_t-def}
\end{equation}
By definition, we have $t = k_t \tau + \phi_t$ and $0 \leq \phi_t < \tau$.
Noting that $G_{J_t}=J_t\tau$, we have
\begin{align}
\Pr(A_t >x)
=
\Pr\left(\frac{t-x}{\tau} > J_t\right)
=
\Pr(J_t < \theta_t(x)),
\label{eq:AoI-CCDF-by-J}
\end{align}
where $\theta_t(x)$ ($t \geq 0$, $x \geq 0$) is given by
\begin{align}
\theta_t(x)=
\max
\left(0, \left\lceil \frac{t - x}{\tau}\right\rceil
\right)=
\max
\left(0,
k_t - \left\lfloor \frac{x-\phi_t}{\tau}\right\rfloor
\right). 
\label{eq:theta_t-def}
\end{align}
It follows from (\ref{eq:theta_t-def}) that $\theta_t(x) \leq k_t+1$
($x \geq 0$), where the equality holds if and only if $x < \phi_t$. 
We also note that $J_t\le k_t$. 
\begin{theorem}
\label{theorem:AoI-CCDF}
The complementary cumulative distribution function (CDF) of the AoI at time $t$ ($t \geq 0$) is given by
\begin{align}
\Pr(A_t > x)=
\left\{
\begin{array}{@{}l@{\;\;}l}
1, & x < \phi_t,
\\
\ds
\Pr\left(
\bigcap_{i=\theta_t(x)}^{k_t}\!\!
\{D_i > (k_t-i)\tau+\phi_t\} 
\right)\!\!, & x \geq \phi_t.\\
\end{array}
\right.
\label{eq:AoI-CCDF}
\end{align}
\end{theorem}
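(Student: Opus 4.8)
The plan is to start from the identity $\Pr(A_t > x) = \Pr(J_t < \theta_t(x))$ established in \eqref{eq:AoI-CCDF-by-J}, which reduces the problem to evaluating the probability that the index $J_t$ of the latest arrived packet falls below the integer threshold $\theta_t(x)$. The whole argument then hinges on giving a clean, event-level description of $\{J_t < m\}$ for an arbitrary integer $m$.

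First I would translate the event $\{J_t < m\}$ into a condition on the delays. Since $J_t = \sup \mathcal{I}_t$ is the \emph{largest} index among the packets that have arrived by time $t$, and since arrivals may occur out of order in this framework, the event $\{J_t < m\}$ is equivalent to saying that \emph{no} packet with index $\ge m$ has arrived by $t$; that is, $\{J_t < m\} = \bigcap_{i \ge m} \{\beta_i > t\}$. Using $\beta_i = i\tau + D_i$ together with $t = k_t\tau + \phi_t$ from \eqref{eq:k_t-phi_t-def}, each event rewrites as $\{\beta_i > t\} = \{D_i > (k_t - i)\tau + \phi_t\}$, so that $\{J_t < m\} = \bigcap_{i \ge m}\{D_i > (k_t-i)\tau + \phi_t\}$.

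Next I would truncate this (formally infinite) intersection. For every $i \ge k_t+1$ the threshold satisfies $(k_t - i)\tau + \phi_t \le \phi_t - \tau < 0 \le D_i$, so $\{D_i > (k_t-i)\tau+\phi_t\}$ is the certain event and contributes nothing; hence $\{J_t < m\} = \bigcap_{i=m}^{k_t}\{D_i > (k_t-i)\tau+\phi_t\}$, with the convention that the right-hand side is the whole sample space when $m = k_t+1$. Substituting $m = \theta_t(x)$ and invoking the two cases recorded just before the theorem --- namely $\theta_t(x) = k_t+1$ iff $x < \phi_t$, and $\theta_t(x) \le k_t$ otherwise --- yields the claim: when $x < \phi_t$ we get $\Pr(J_t < k_t+1) = 1$ because $J_t \le k_t$ always, and when $x \ge \phi_t$ the intersection runs over $i = \theta_t(x), \ldots, k_t$, giving exactly \eqref{eq:AoI-CCDF}.

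The step I expect to be the main obstacle is the event-level characterization $\{J_t < m\} = \bigcap_{i \ge m}\{\beta_i > t\}$. Because the virtual delay process permits out-of-order arrivals, it is not enough to track a single ``first missing'' packet as one would under an FCFS discipline; the condition $J_t < m$ genuinely requires \emph{all} packets indexed $m, m+1, \ldots$ to be still in transit, which is precisely the source of the intersection in \eqref{eq:AoI-CCDF}. Care is also needed with the boundary conventions ($J_t = -\infty$ when $\mathcal{I}_t = \emptyset$, the empty intersection when $\theta_t(x) = k_t+1$, and the lower limit $\theta_t(x) = 0$), but these are routine once the main equivalence is in place.
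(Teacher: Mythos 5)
Your proof is correct and follows essentially the same route as the paper: both reduce $\Pr(A_t>x)=\Pr(J_t<\theta_t(x))$ to the event that none of the packets indexed $\theta_t(x),\ldots,k_t$ has arrived by time $t$, rewrite $\beta_i>t$ as $D_i>(k_t-i)\tau+\phi_t$ using $t=k_t\tau+\phi_t$, and dispose of the case $x<\phi_t$ via $\theta_t(x)=k_t+1$. The only difference is that you make explicit the truncation of the intersection at $i=k_t$ (packets with $i\geq k_t+1$ cannot have arrived since $D_i\geq 0>(k_t-i)\tau+\phi_t$), a step the paper leaves implicit.
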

\begin{proof}
From (\ref{eq:AoI-CCDF-by-J}), $\Pr(A_t > x)$ is equivalent to the
probability that for all $i \in
\{\theta_t(x),\theta_t(x)+1,\ldots,k_t\}$,
the $i$-th packet does not arrive at the receiver node
until time $t$, i.e., $\beta_i > t$. 
For $x < \phi_t$, we have $\Pr(A_t > x)=1$ because $\theta_t(x) =
k_t+1$ holds in this case. 
For $x \geq \phi_t$, on the other hand, the arrival time of the $i$-th
packet is given by $\beta_i = i\tau + D_i$. 
Therefore, using $t = k_t\tau + \phi_t$, we see that $\beta_i > t$ is
equivalent to $D_i > (k_t-i)\tau + \phi_t$, which implies \eqref{eq:AoI-CCDF}.
\end{proof}
While the complementary CDF of the AoI shown in Theorem \ref{theorem:AoI-CCDF}
might look complicated, it enables us to derive some properties of the
transient AoI distribution as follows.

Let $D_n^{\min}$ denote the left endpoint of the support of the delay distribution:
\[
D_n^{\min} := \inf\{x \geq 0;\, \Pr(D_n \leq x) > 0\}, \quad
n = 0,1,\ldots.
\]

\begin{corollary}
\label{cor:A_t-x-refine}
The complementary CDF of the AoI at time $t$ ($t \geq 0$) is a step
function with respect to $x$. More specifically, $A_t$ takes its value
on a finite set $\{n\tau+\phi_t;\, n = j_t^*, j_t^*+1,\ldots, k_t\}\cup
\{\infty\}$ with probability $1$, and we have

\begin{equation}
\Pr(A_t = \infty) + \sum_{n=j_t^*}^{k_t} \Pr(A_t = n\tau+\phi_t) = 1,
\label{eq:A_t-finite}
\end{equation}
where
\[
j_t^* := \min\{j \in \{0,1,\ldots,k_t\};\, D_{k_t-j}^{\min} < j\tau+\phi_t\}.
\]
Furthermore, we have 
\begin{equation}
\Pr(A_t = \infty) = \Pr\left( \bigcap_{i=0}^{k_t} \{D_i > (k_t-i)\tau+\phi_t\} \right),
\label{eq:P-A_t-infty}
\end{equation}
which represents the probability that no packets have arrived at the receiver node until time $t$.
\end{corollary}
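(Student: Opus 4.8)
The plan is to read off the full distributional structure of $A_t$ from the step function $\theta_t(\cdot)$ that already appears in Theorem~\ref{theorem:AoI-CCDF}. First I would observe that, for $x \geq \phi_t$, the complementary CDF in \eqref{eq:AoI-CCDF} depends on $x$ only through $\theta_t(x)$, and that by \eqref{eq:theta_t-def} the map $x \mapsto \theta_t(x)$ is piecewise constant, equal to $\max(0,\,k_t - n)$ on the interval $[n\tau+\phi_t,\,(n+1)\tau+\phi_t)$ for $n = 0,1,\ldots$. Hence $\Pr(A_t > x)$ is a right-continuous step function whose value can change only at the points $x = n\tau+\phi_t$, $n = 0,1,\ldots,k_t$ (and at $x=\phi_t$, where it drops from $1$). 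This already establishes that the distribution of $A_t$ is purely atomic, supported on the finite set $\{n\tau+\phi_t;\, n=0,\ldots,k_t\}$ together with a possible atom at $\infty$; finite values cannot exceed $t=k_t\tau+\phi_t$ because $0 \leq J_t \leq k_t$.

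Next I would compute the individual atoms by differencing the step function across each jump. Since $\{A_t = n\tau+\phi_t\} = \{J_t = k_t-n\}$, using \eqref{eq:AoI-CCDF} and the observation that the event for $\theta_t=k_t-n$ is the event for $\theta_t=k_t-n+1$ intersected with the single extra constraint $\{D_{k_t-n} > n\tau+\phi_t\}$, the difference telescopes cleanly to give, for $0 \leq n \leq k_t$ (with the empty intersection understood as the whole space when $n=0$),
\begin{equation*}
\Pr(A_t = n\tau+\phi_t) = \Pr\!\left( \{D_{k_t-n} \leq n\tau+\phi_t\} \cap \bigcap_{i=k_t-n+1}^{k_t} \{D_i > (k_t-i)\tau+\phi_t\} \right).
\end{equation*}

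The crux is then to show that these atoms vanish for $n < j_t^*$, which is the step where $D_m^{\min}$ enters. Each such atom is bounded above by $\Pr(D_{k_t-n} \leq n\tau+\phi_t)$; by the definition of $j_t^*$ as the smallest index $j$ with $D_{k_t-j}^{\min} < j\tau+\phi_t$, every $n < j_t^*$ satisfies $n\tau+\phi_t \leq D_{k_t-n}^{\min}$, and the definition of $D_{k_t-n}^{\min}$ as the left endpoint of the support forces $\Pr(D_{k_t-n} \leq n\tau+\phi_t)=0$. This pins the finite support to $\{n\tau+\phi_t;\, n=j_t^*,\ldots,k_t\}$. I expect this vanishing argument to be the main obstacle, since it is the only place that requires care about the support of each marginal $D_m$ --- and, strictly, about the borderline situation $n\tau+\phi_t = D_{k_t-n}^{\min}$, where an atom of $D_{k_t-n}$ at its left endpoint could make the bound non-strict.

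Finally, I would obtain \eqref{eq:P-A_t-infty} by evaluating the complementary CDF at any $x \geq t$, where $\theta_t(x)=0$ and \eqref{eq:AoI-CCDF} reduces to the intersection over all $i \in \{0,\ldots,k_t\}$; since no finite value of $A_t$ exceeds $t$, this limiting value equals $\Pr(A_t = \infty)$, the no-arrival probability. The normalization \eqref{eq:A_t-finite} then follows because $\Pr(A_t > x)$ decreases monotonically from $1$ (for $x<\phi_t$) to $\Pr(A_t=\infty)$ (for $x\geq t$), so the total mass of the finite atoms equals $1-\Pr(A_t=\infty)$, and by the previous step this mass is carried entirely by the indices $n=j_t^*,\ldots,k_t$.
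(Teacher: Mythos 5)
Your proof is correct in substance and takes essentially the same route as the paper's: both arguments read everything off the step structure in Theorem \ref{theorem:AoI-CCDF}, and both eliminate the indices $n < j_t^*$ by the same mechanism, namely that for such $n$ the threshold $n\tau+\phi_t$ lies at or below the left endpoint of the support of $D_{k_t-n}$. The only packaging difference is that the paper works with probability-one events (it defines $i_t^* := \max\{i \in \{0,1,\ldots,k_t\};\, D_i^{\min} \leq (k_t-i)\tau+\phi_t\}$, drops the constraints with $i > i_t^*$ from the intersection in (\ref{eq:AoI-CCDF}), and reads off the reduced set of possible discontinuities), whereas you work with the dual probability-zero events, differencing the step function to obtain an explicit atom formula and then showing those atoms vanish. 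Your explicit expression for $\Pr(A_t = n\tau+\phi_t)$ is a harmless refinement that the paper does not state.

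The borderline situation you flag, however, deserves more than a parenthetical remark: it is a genuine discrepancy, and it sits in the paper, not only in your write-up. The corollary defines $j_t^*$ with the strict inequality $D_{k_t-j}^{\min} < j\tau+\phi_t$, while the paper's proof uses the non-strict $i_t^*$ above and then asserts $j_t^* = k_t - i_t^*$; these can disagree precisely when $D_{k_t-n}^{\min} = n\tau+\phi_t$ for some $n$. In that case your key step fails for exactly the reason you give --- $D_{k_t-n}^{\min} \geq n\tau+\phi_t$ does not force $\Pr(D_{k_t-n} \leq n\tau+\phi_t) = 0$ if $D_{k_t-n}$ has an atom at its left endpoint --- and in fact the statement itself can fail, not merely the proof. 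For instance, take $k_t = 0$ with $D_0^{\min} = \phi_t$ and $\Pr(D_0 = \phi_t) > 0$: the index set $\{j \in \{0\};\, D_0^{\min} < j\tau + \phi_t\}$ is then empty, so the finite part of the claimed support is empty and (\ref{eq:A_t-finite}) would force $\Pr(A_t = \infty) = 1$, yet $\Pr(A_t = \phi_t) \geq \Pr(D_0 = \phi_t) > 0$. The clean fix, for your proof and the paper's alike, is to define $j_t^*$ with $\leq$ in place of $<$ (equivalently, to match the proof's $i_t^*$): then $n < j_t^*$ yields the strict inequality $D_{k_t-n}^{\min} > n\tau+\phi_t$, your vanishing argument closes without further change, and (\ref{eq:A_t-finite}) and (\ref{eq:P-A_t-infty}) follow exactly as you conclude.
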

\begin{proof}
From (\ref{eq:theta_t-def}) and (\ref{eq:AoI-CCDF}), it is obvious
that $\Pr(A_t>x)$ ($t \geq 0$, $x \geq 0$) is a step function with
respect to $x$ and it can have discontinuities at $x \in \{n\tau + \phi;\, n=0,1,\ldots,k_t\}$. 

We next show that this set of the possible AoI values can be further narrowed down.
Note that we have $\Pr(D_i > (k_t-i)\tau+\phi_t)=1$ for $i \in \{0,1,\ldots,k_t\}$ such that $D_i^{\min} > (k_t-i)\tau + \phi_t$. 
Let $i_t^*$ denote the indice defined as
\[
i_t^* = \max\{i \in \{0,1,\ldots,k_t\};\, D_i^{\min} \leq (k_t-i)\tau + \phi_t\}. 
\]
The expression (\ref{eq:AoI-CCDF}) is then rewritten as
\begin{align}
\Pr(A_t > x)=
\left\{
\begin{array}{@{}l@{\;\;}l}
1, & x < \phi_t,
\\
\ds
\Pr\left(
\bigcap_{i=\theta_t(x)}^{i_t^*}\!\!
\{D_i > (k_t-i)\tau+\phi_t\} 
\right)\!\!, & x \geq \phi_t.\\
\end{array}
\right.
\label{eq:AoI-CCDF-2}
\end{align}
Note here that $\theta_t(x) \leq i_t^*$ for $x \geq (k_t-i_t^*)\tau + \phi$. 
From (\ref{eq:AoI-CCDF-2}), we can verify that the discontinuity
points of $\Pr(A_t>x)$ are included in the set $\{n\tau + \phi;\,
n=k_t-i_t^*, k_t-i_t^*+1,\ldots, k_t\}$, which implies (\ref{eq:A_t-finite})
because $j_t^* = k_t-i_t^*$. 

Also, noting that $\theta_t(x)=0$ holds for $x>t$,
we obtain (\ref{eq:P-A_t-infty}) directly from (\ref{eq:AoI-CCDF}).
\end{proof}

\begin{corollary}
If the sequence of system delays $(D_n)_{n=0,1,\ldots}$ is
stationary, the complementary CDF of the AoI satisfies
\begin{equation}
\Pr(A_t > x)=\Pr(A_{t+\tau}>x),
\qquad
t \geq x,
\label{eq:A_t-periodic}
\end{equation}
i.e., for $t \geq x$, it is a periodic function of $t$ with period $\tau$. 
\end{corollary}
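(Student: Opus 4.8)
The plan is to exploit the explicit intersection formula \eqref{eq:AoI-CCDF} together with the shift-invariance granted by stationarity. First I would record how the auxiliary quantities in \eqref{eq:k_t-phi_t-def} transform under $t \mapsto t+\tau$. Since $k_{t+\tau} = \lfloor (t+\tau)/\tau\rfloor = k_t + 1$ and $\phi_{t+\tau} = (t+\tau) - (k_t+1)\tau = \phi_t$, the fractional part is preserved while the integer index advances by one. In particular, the threshold $x < \phi_t$ appearing in \eqref{eq:AoI-CCDF} is the same for $t$ and $t+\tau$, so in the regime $x < \phi_t$ both sides of \eqref{eq:A_t-periodic} equal $1$ and there is nothing to prove.

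For the nontrivial regime $x \geq \phi_t$ I would next analyze the lower summation index. Using $\phi_{t+\tau}=\phi_t$ in the second expression of \eqref{eq:theta_t-def}, we get $\theta_{t+\tau}(x) = \max(0, (k_t+1) - \lfloor (x-\phi_t)/\tau\rfloor)$, which differs from $\theta_t(x) = \max(0, k_t - \lfloor(x-\phi_t)/\tau\rfloor)$ only through the $+1$ inside the maximum. The crucial observation is that the hypothesis $t\geq x$ forces the maximum to be inactive: $t\geq x$ gives $k_t \geq (x-\phi_t)/\tau \geq \lfloor (x-\phi_t)/\tau\rfloor$, and since $k_t$ and the floor are both integers this yields $k_t - \lfloor(x-\phi_t)/\tau\rfloor\geq 0$. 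Consequently $\theta_t(x) = k_t - \lfloor(x-\phi_t)/\tau\rfloor$ and $\theta_{t+\tau}(x) = \theta_t(x)+1$.

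With these relations in hand, I would write out $\Pr(A_{t+\tau}>x)$ from \eqref{eq:AoI-CCDF} as the probability of $\bigcap_{i=\theta_t(x)+1}^{k_t+1}\{D_i > (k_t+1-i)\tau+\phi_t\}$ and re-index by $i = \ell+1$. The shift sends the range $\theta_t(x)+1 \le i \le k_t+1$ to $\theta_t(x) \le \ell \le k_t$ and maps the threshold $(k_t+1-i)\tau+\phi_t$ to $(k_t-\ell)\tau+\phi_t$; hence $\Pr(A_{t+\tau}>x)$ equals the probability of $\bigcap_{\ell=\theta_t(x)}^{k_t}\{D_{\ell+1} > (k_t-\ell)\tau+\phi_t\}$. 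This is exactly the event defining $\Pr(A_t>x)$ with every $D_i$ replaced by $D_{i+1}$ and the thresholds left untouched, so the two events are shifts of one another by a single index.

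The conclusion then follows immediately from stationarity of $(D_n)_{n=0,1,\ldots}$: shift-invariance of the finite-dimensional distributions gives that $(D_{\theta_t(x)},\ldots,D_{k_t})$ and $(D_{\theta_t(x)+1},\ldots,D_{k_t+1})$ are equal in distribution, so the two intersection events carry the same probability and \eqref{eq:A_t-periodic} holds. I expect the only real subtlety to be the bookkeeping of the lower index $\theta_t(x)$: the identity $\theta_{t+\tau}(x)=\theta_t(x)+1$ fails as soon as the truncation at $0$ becomes active, and it is precisely the assumption $t\geq x$ that rules this out, which is exactly why the periodicity is asserted only for $t\geq x$.
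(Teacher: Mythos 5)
Your proposal is correct and follows essentially the same route as the paper's own proof: the identities $k_{t+\tau}=k_t+1$, $\phi_{t+\tau}=\phi_t$, the relation $\theta_{t+\tau}(x)=\theta_t(x)+1$ under $t\geq x$, the re-indexing of the intersection, and the final appeal to stationarity of $(D_n)_{n=0,1,\ldots}$ all mirror the published argument. If anything, your explicit verification that $t\geq x$ deactivates the truncation at $0$ in \eqref{eq:theta_t-def} is cleaner than the paper's treatment, which states the index shift without justification (and with a typographical slip writing $\theta_t(x)$ where $\theta_{t+\tau}(x)$ is meant).
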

\begin{proof}
From (\ref{eq:k_t-phi_t-def}), it is satisfied that
\begin{align*}
k_{t+\tau} &= \left\lfloor \frac{t+\tau}{\tau}\right\rfloor=k_t +1,\\
\phi_{t+\tau} &= t + \tau - k_{t+\tau} \tau= t + \tau - (k_t +1)\tau=t-k_t\tau=\phi_t. 
\end{align*}
Therefore, from (\ref{theorem:AoI-CCDF}), we have
\begin{align}
\Pr(A_{t+\tau} > x)&=
\left\{
\begin{array}{@{}l@{\;\;}l}
1, & x < \phi_{t+\tau},
\\
\ds
\Pr\left(
\bigcap_{i=\theta_{t+\tau}(x)}^{k_{t+\tau}}\!\!\!\!\!\!
\{D_i > (k_{t+\tau}-i)\tau+\phi_{t+\tau}\} 
\right)\!\!, & x \geq \phi_{t+\tau},\\
\end{array}
\right.
\nonumber
\\
&=
\left\{
\begin{array}{@{}l@{\;\;}l}
1, & x < \phi_t,
\\
\ds
\Pr\left(
\bigcap_{i=\theta_{t+\tau}(x)}^{k_t+1}\!\!\!\!\!\!
\{D_i > (k_t-i+1)\tau+\phi_t\} 
\right)\!\!, & x \geq \phi_t.\\
\end{array}
\right.
\label{eq:A_t+tau}
\end{align}
When $t \geq x$, it is satisfied that
\begin{align*}
\theta_t(x) = k_{t+\tau} - \left\lfloor \frac{x-\phi_{t+\tau}}{\tau}\right\rfloor=k_t + 1 - \left\lfloor \frac{x-\phi_t}{\tau}\right\rfloor=\theta_t(x)+1. 
\end{align*}
Therefore, 
\begin{align}
\Pr\left(
\bigcap_{i=\theta_t(x)+1}^{k_{t}+1} \{D_i > (k_{t}+1-i)\tau+\phi_t\} \right)
=\Pr\left(
\bigcap_{i=\theta_t(x)}^{k_{t}} \{D_{i+1} > (k_{t}-i)\tau+\phi_t\} \right). 
\label{eq:A_t+tau-x-large}
\end{align}
(\ref{eq:A_t+tau-x-large}) implied that if $(D_n)_{n=0,1,\ldots}$ is stationary, then 
$(D_{\theta_t(x)},\ldots,D_{k_t})$ and $(D_{\theta_t(x)+1}, \ldots,D_{k_t+1})$ have the same joint distribution. 
Therefore, from (\ref{theorem:AoI-CCDF}), (\ref{eq:A_t+tau}) and (\ref{eq:A_t+tau-x-large}), we obtain (\ref{eq:A_t-periodic}). 
\end{proof}

Finally, we prove that stronger dependency in the sequence of delays
$(D_n)_{n=0,1,\ldots}$ leads to an increase in the AoI. 
To that end, we introduce the following definitions of stochastic orders \cite{shaked2006}:
\begin{defn}
Let $Y^{\sys{1}}$ and $Y^{\sys{2}}$ $(Y^{\sys{1}}, Y^{\sys{2}} \in
[0,\infty))$ denote non-negative random variables. 
$Y^{\sys{1}}$ is said to be smaller than $Y^{\sys{2}}$ in the usual
stochastic order (denoted by $Y^{\sys{1}} \leq_{\st} Y^{\sys{2}}$) if and only if
\[
\Pr(Y^{\sys{1}} > x) \leq \Pr(Y^{\sys{2}} > x), \quad \forall x \geq 0.
\]
\end{defn}

\begin{defn}\label{defn:pqd}
Let $\bm{Y}^{\sys{1}}$ and $\bm{Y}^{\sys{2}}$ $(\bm{Y}^{\sys{1}},
\bm{Y}^{\sys{2}} \in \mathbb{R}^{\ell})$ denote $\ell$-dimensional random
vectors. 
$\bm{Y}^{\sys{1}}$ is said to be smaller than $\bm{Y}^{\sys{2}}$ in the positive
quadrant dependent (PQD) order (denoted by $\bm{Y}^{\sys{1}}
\leq_{\PQD} \bm{Y}^{\sys{2}}$) if and only if
\begin{align*}
\Pr(\bm{Y}^{\sys{1}} \leq \bm{x}) &\leq \Pr(\bm{Y}^{\sys{2}} \leq \bm{x}),
\quad
\forall \bm{x} \in \mathbb{R}^{\ell}
,
\end{align*}
and 
\begin{align*}
\Pr(\bm{Y}^{\sys{1}} > \bm{x}) &\leq \Pr(\bm{Y}^{\sys{2}} > \bm{x}),
\quad
\forall \bm{x} \in \mathbb{R}^{\ell}.
\end{align*}

\end{defn}
The PQD order is used to compare the strength of dependence between multidimensional random variables.
By definition \ref{defn:pqd}, $\bm{Y}^{\sys{1}} \leq_{\PQD} \bm{Y}^{\sys{2}}$
implies that $\bm{Y}^{\sys{1}}$ and $\bm{Y}^{\sys{2}}$ have the same marginal distribution. 

The following result is immediate from Theorem \ref{theorem:AoI-CCDF}
and Definition \ref{defn:pqd}: 
\begin{theorem}
\label{theorem:PQD-AoI}
Consider two systems with different sequences of delays. 
In each of these systems, the sender node generates packets at time $\alpha_n = n\tau$. 
Let $D_i^{\sys{j}}$ $(i=1,\ldots,m,\ j=1,2)$ denote the delay of $i$-th packet in system $j$ and $A_t^{\sys{j}}$ $(i=1,\ldots,m,\ j=1,2)$ denote the AoI in system $j$. 
If we have
\begin{align}
(D_0^{\sys{1}}, D_1^{\sys{1}}, \ldots, D_m^{\sys{1}})
\!\leq_{\PQD}\!
(D_0^{\sys{2}}, D_1^{\sys{2}}, \ldots, D_m^{\sys{2}}),
\label{eq:D-PQD}
\end{align}
for any $m=0,1,\ldots$, then $A_t^{\sys{1}} \leq_{\st} A_t^{\sys{2}}$ ($t \geq 0$) holds.  
\end{theorem}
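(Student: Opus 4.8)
The plan is to reduce the claim to a direct application of the upper-orthant inequality built into the PQD order, using the explicit form of the complementary CDF in Theorem \ref{theorem:AoI-CCDF}. By the definition of $\leq_{\st}$, it suffices to establish $\Pr(A_t^{\sys{1}} > x) \leq \Pr(A_t^{\sys{2}} > x)$ for every $x \geq 0$. First I would split into the two regimes of Theorem \ref{theorem:AoI-CCDF}: for $x < \phi_t$ both sides equal $1$, so the inequality holds trivially; the substantive case is $x \geq \phi_t$.

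For $x \geq \phi_t$, Theorem \ref{theorem:AoI-CCDF} writes $\Pr(A_t^{\sys{j}} > x)$ as the probability of the intersection $\bigcap_{i=\theta_t(x)}^{k_t}\{D_i^{\sys{j}} > (k_t-i)\tau+\phi_t\}$, which is an upper-orthant probability of the sub-vector $(D_i^{\sys{j}})_{i=\theta_t(x)}^{k_t}$ evaluated at the threshold with entries $(k_t-i)\tau+\phi_t$. The key step is to promote this to a full-dimensional upper-orthant probability of $(D_0^{\sys{j}},\ldots,D_{k_t}^{\sys{j}})$, so that the hypothesis \eqref{eq:D-PQD} with $m=k_t$ applies verbatim. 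I would do this by padding the threshold: set $\bm{x}^\ast=(x_0^\ast,\ldots,x_{k_t}^\ast)$ with $x_i^\ast=(k_t-i)\tau+\phi_t$ for $\theta_t(x)\le i\le k_t$ and $x_i^\ast=-1$ (any value below the support) for $0\le i<\theta_t(x)$. Since the delays are non-negative, $\{D_i^{\sys{j}} > -1\}$ holds with probability one, so adjoining the indices $i<\theta_t(x)$ does not change the intersection, and hence $\Pr(A_t^{\sys{j}} > x)=\Pr\bigl((D_0^{\sys{j}},\ldots,D_{k_t}^{\sys{j}}) > \bm{x}^\ast\bigr)$.

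With both complementary CDFs expressed as upper-orthant probabilities of the full delay vector at the same threshold $\bm{x}^\ast$, the hypothesis \eqref{eq:D-PQD} with $m=k_t$, combined with the second (upper-orthant) inequality of Definition \ref{defn:pqd}, gives $\Pr\bigl((D_0^{\sys{1}},\ldots,D_{k_t}^{\sys{1}}) > \bm{x}^\ast\bigr)\le \Pr\bigl((D_0^{\sys{2}},\ldots,D_{k_t}^{\sys{2}}) > \bm{x}^\ast\bigr)$, that is, $\Pr(A_t^{\sys{1}} > x)\le \Pr(A_t^{\sys{2}} > x)$. Since this holds for all $x\ge 0$, the usual stochastic order $A_t^{\sys{1}}\le_{\st} A_t^{\sys{2}}$ follows.

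As already indicated just before the statement, there is no serious obstacle: the result is essentially immediate once the complementary CDF is recognized as an upper-orthant probability. The only point requiring care is the padding argument, whose validity rests on the non-negativity of the delays --- this is exactly what lets one identify the sub-vector orthant event with a full-vector orthant event to which the PQD comparison applies. I would also note that only the upper-orthant half of the PQD order is invoked here; the lower-orthant condition and the resulting equality of marginals are not needed for this particular direction of the comparison.
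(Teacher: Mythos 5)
Your proposal is correct and follows exactly the route the paper intends: the paper gives no separate proof, stating only that the result ``is immediate from Theorem~\ref{theorem:AoI-CCDF} and Definition~\ref{defn:pqd},'' and your argument is precisely that reduction --- recognizing $\Pr(A_t > x)$ as an upper-orthant probability of the delay vector and invoking the upper-orthant half of the PQD order with $m = k_t$. Your padding step (using non-negativity of the delays to extend the sub-vector event on indices $\theta_t(x),\ldots,k_t$ to a full-vector orthant event) is a detail the paper leaves implicit, and you supply it correctly.
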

Theorem \ref{theorem:PQD-AoI} indicates that the AoI increases with stronger dependence in the sequence of delays. In particular, the system with i.i.d. delays (i.e., an infinite-server queueing system) achieves the lowest AoI among all systems with the same marginal delay distribution.

\section{Special case: model with the sequence of delays composed from a stationary Gaussian process}
\label{sec:gaussian}

As a concrete example of the framework presented above, we consider a
model in which the virtual delay process $(X_t)_{t \geq 0}$ is defined
as follows. Let $(Z_t)_{t \geq 0}$ denote a stationary Gaussian process
with mean zero and unit variance. By the properties of stationary
Gaussian processes, the distribution of $(Z_t)_{t \geq 0}$ is uniquely
determined by its autocovariance function, given by $\hat{\sigma}(t)
:= \Cov[Z_0, Z_t] = \E[Z_0 Z_t]$ for $t \geq 0$.

More specifically, let $\bm{\Sigma}_{t_0,t_1,\ldots,t_{n-1}}$ denote an
$n \times n$ matrix whose $(i,j)$-th element is given by
\[
[\bm{\Sigma}_{t_0,t_1,\ldots,t_{n-1}}]_{i,j}
=
\hat{\sigma}(t_i-t_j),
\quad
i,j \in \{0,1,\ldots,n-1\}.
\]
The random vector $\bm{Z} = (Z_{t_0}, Z_{t_1}, \ldots, Z_{t_{n-1}})$ ($n=0,1,
\ldots$) then follows an $n$-dimensional normal distribution with mean
vector $\bm{0}$ and covariance matrix $\bm{\Sigma}:=\bm{\Sigma}_{t_0,t_1,\ldots,t_{n-1}}$;
the joint density function of $\bm{Z}$ is given by
\[
f_{\bms{Z}}(\bm{x})
= 
\frac{1}{\sqrt{(2\pi)^n \det(\bm{\Sigma)}}}
\exp\left(-\frac{1}{2} \bm{x}\bm{\Sigma^{-1}}\bm{x}^{\top}\right). 
\]

In this section, we define a model in which the virtual delay process $(X_t)_{t \geq 0}$ is given by
\begin{equation}
X_t = g(Z_t),
\quad
t \geq 0,
\label{eq:X_t-by-Z}
\end{equation}
where $g : \mathbb{R} \to [0,\infty)$ denotes a non-negative, monotonically non-decreasing function.
We refer to $(X_t)_{t \geq 0}$ as {\it a virtual delay process
composed from the stationary Gaussian process $(Z_t)_{t \geq 0}$}. 
Let $\mu := \E[X_t]$ and $\sigma(t) := \Cov[X_0, X_t]$ denote the mean and autocovariance function of $X_t$. 
\begin{example}
If we set $g(x) = x_{\min} + \exp(\hat{\mu} + \hat{s}x)$ ($x_{\min} \geq 0$, $\hat{\mu} \in
\mathbb{R}$, $\hat{s} > 0$), then $X_t - x_{\min}$ follows a
log-normal distribution, i.e., $X_t$ follows a log-normal distribution
with the left-end point $x_{\min}$. 
\end{example}

\begin{example}
If we set $g(x) = \max(x_{\min}, \hat{\mu} + \hat{s}x)$ ($x_{\min} \geq 0$,
$\hat{\mu} \in \mathbb{R}$, $\hat{s} > 0$), then $X_t$ follows a censored normal distribution.
\end{example}
Note that $(X_t)_{t \geq 0}$ is also a stationary process due to the stationarity of $(Z_t)_{t \geq 0}$. 
Fig. \ref{fig:z_sim} shows sample paths of $(X_t)_{t \geq 0}$
characterized by $g$ introduced in Examples 1 and 2. 
\begin{figure}[t!]
\centering
\includegraphics[scale=0.5]{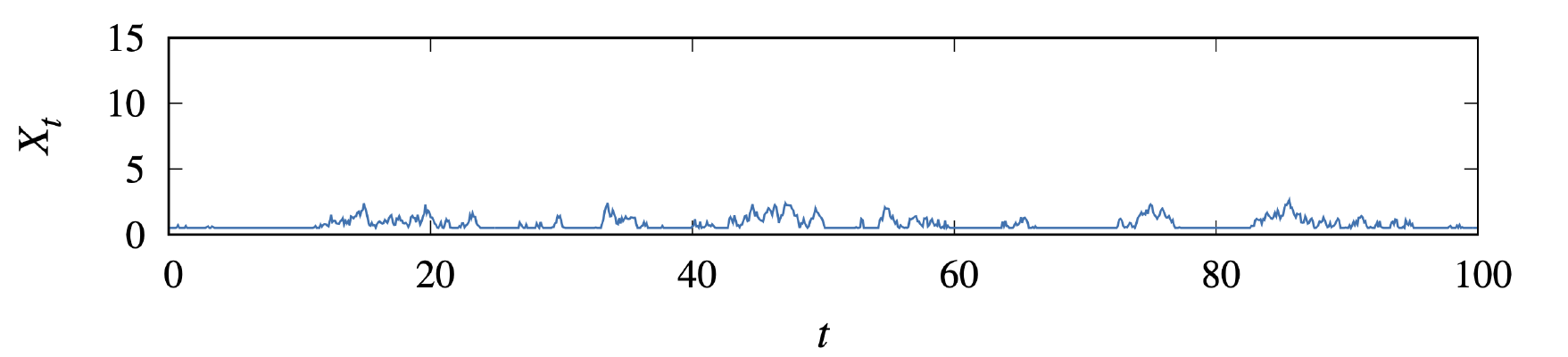} 
\includegraphics[scale=0.5]{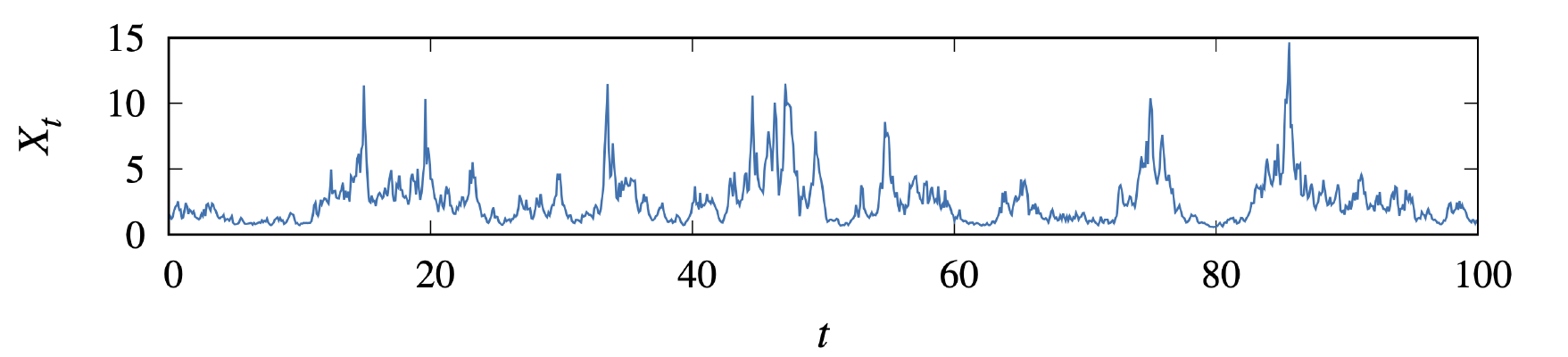} 
\caption{Sample paths of virtual delay processes $(X_t)_{t\ge 0}$
composed from a stationary Gaussian process (Top: Example 1, Bottom: Example 2)}\label{fig:z_sim}
\end{figure}

\medskip

In this model, Theorem \ref{theorem:PQD-AoI} simplifies as follows:
\begin{theorem}
\label{theorem:AoI-order-gauss}
Consider two systems with sequences of delays composed from stationary Gaussian processes with autocovariance functions $\hat{\sigma}^{\sys{1}}(t)$ and $\hat{\sigma}^{\sys{2}}(t)$, respectively. 
In each of these systems, the sender node generates packets at time $\alpha_n = n\tau$. 
Let $A_t^{\sys{j}}$ denote the AoI in system $j$, and let $g$ denote
the monotonically non-decreasing function used to compose sequences of
delays in both systems. We then have the following: 
\begin{itemize}
\item[(i)]
If the autocovariance functions of the stationary Gaussian processes
$(Z^{\sys{1}}_t)_{t \geq 0}$ and $(Z^{\sys{2}}_t)_{t \geq 0}$ satisfy
\[
\hat{\sigma}^{\sys{1}}(t) \leq \hat{\sigma}^{\sys{2}}(t),
\quad
\forall t \geq 0,
\]
then $A_t^{\sys{1}}\leq_{\st} A_t^{\sys{2}}$ for $t \geq 0$.
\item[(ii)] 
If the autocovariance functions of the virtual delay process
$(X^{\sys{1}}_t)_{t \geq 0}$ and $(X^{\sys{2}}_t)_{t \geq 0}$ satisfy
\[
\sigma^{\sys{1}}(t) < \sigma^{\sys{2}}(t),
\quad
\forall t > 0.
\]
then $A_t^{\sys{1}}\leq_{\st} A_t^{\sys{2}}$ for $t \geq 0$. 
\end{itemize}
\end{theorem}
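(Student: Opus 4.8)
The plan is to derive both statements from Theorem~\ref{theorem:PQD-AoI} by verifying the PQD ordering of the finite-dimensional delay vectors, thereby reducing everything to a comparison of Gaussian covariance structures. For part~(i), fix $m$ and consider the Gaussian vectors $\bm{Z}^{\sys{j}} = (Z_0^{\sys{j}}, Z_\tau^{\sys{j}}, \ldots, Z_{m\tau}^{\sys{j}})$. Both are centered with standard-normal marginals, and the hypothesis together with $\hat{\sigma}^{\sys{1}}(0)=\hat{\sigma}^{\sys{2}}(0)=1$ means the two covariance matrices share the same unit diagonal while every off-diagonal entry of $\bm{\Sigma}^{\sys{1}}$ is dominated by the corresponding entry of $\bm{\Sigma}^{\sys{2}}$. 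First I would invoke the classical Slepian inequality, which gives $\Pr(\bm{Z}^{\sys{1}} \leq \bm{x}) \leq \Pr(\bm{Z}^{\sys{2}} \leq \bm{x})$ for all $\bm{x}$. For the upper-orthant half of Definition~\ref{defn:pqd}, I would exploit the symmetry of centered Gaussians: since $-\bm{Z}^{\sys{j}}$ has the same law as $\bm{Z}^{\sys{j}}$, the upper-orthant probability at $\bm{x}$ coincides with the lower-orthant probability at $-\bm{x}$, so a second application of Slepian at $-\bm{x}$ yields $\Pr(\bm{Z}^{\sys{1}} > \bm{x}) \leq \Pr(\bm{Z}^{\sys{2}} > \bm{x})$. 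Hence $\bm{Z}^{\sys{1}} \leq_{\PQD} \bm{Z}^{\sys{2}}$.

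Next I would transport this ordering through $g$. Since $D_n^{\sys{j}} = g(Z_{n\tau}^{\sys{j}})$ with $g$ non-decreasing and applied coordinatewise, and since the PQD (concordance) order is preserved under non-decreasing transformations of each coordinate~\cite{shaked2006}, it follows that $(D_0^{\sys{1}}, \ldots, D_m^{\sys{1}}) \leq_{\PQD} (D_0^{\sys{2}}, \ldots, D_m^{\sys{2}})$; the common marginal law $g(Z_0)$ is shared by both systems, as the order requires. Because $m$ was arbitrary, condition~\eqref{eq:D-PQD} holds for all $m$, and Theorem~\ref{theorem:PQD-AoI} immediately delivers $A_t^{\sys{1}} \leq_{\st} A_t^{\sys{2}}$ for $t \geq 0$.

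For part~(ii) I would reduce to part~(i) via a scalar monotonicity lemma: for fixed non-decreasing $g$, the map $\rho \mapsto \Cov[g(Z_0), g(Z_t)]$, where $(Z_0, Z_t)$ is standard bivariate normal with correlation $\rho = \hat{\sigma}(t)$, is non-decreasing. This would follow from Hoeffding's covariance identity, expressing the covariance as an integral of $\Pr(Z_0 \leq s, Z_t \leq u) - \Pr(Z_0 \leq s)\Pr(Z_t \leq u)$ against $\dd g \otimes \dd g$, combined with the classical Price/Plackett fact that $\partial_\rho \Pr(Z_0 \leq s, Z_t \leq u)$ equals the bivariate normal density and is thus non-negative. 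Writing $\sigma^{\sys{j}}(t) = F(\hat{\sigma}^{\sys{j}}(t))$ with $F$ non-decreasing, the strict hypothesis $\sigma^{\sys{1}}(t) < \sigma^{\sys{2}}(t)$ forces $\hat{\sigma}^{\sys{1}}(t) < \hat{\sigma}^{\sys{2}}(t)$ (were $\hat{\sigma}^{\sys{1}}(t) \geq \hat{\sigma}^{\sys{2}}(t)$, monotonicity of $F$ would give $\sigma^{\sys{1}}(t) \geq \sigma^{\sys{2}}(t)$), hence $\hat{\sigma}^{\sys{1}}(t) \leq \hat{\sigma}^{\sys{2}}(t)$ for all $t > 0$, with equality at $t=0$ since both equal $1$. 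Part~(i) then applies.

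The main obstacle I anticipate is the rigorous handling of the two-sided orthant requirement in part~(i): Slepian's inequality is usually stated for the lower orthant alone, so the symmetry argument for the upper orthant must be made carefully (including the coincidence of $<$ and $\leq$ under the continuous Gaussian law), and the concordance order must be pushed through a $g$ that need not be strictly increasing, which requires attention at the boundary of each half-line event. In part~(ii), the delicate point is justifying the Hoeffding representation and the sign of $\partial_\rho$ so as to obtain monotonicity of $\rho \mapsto \Cov[g(Z_0),g(Z_t)]$ for a general, possibly non-smooth, monotone $g$, which I expect to settle by an approximation argument reducing to smooth, strictly increasing $g$.
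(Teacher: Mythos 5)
Your proposal is correct and follows essentially the same route as the paper: both parts reduce to Theorem~\ref{theorem:PQD-AoI} by establishing the PQD ordering of the vectors $(Z_0^{\sys{j}},\ldots,Z_{m\tau}^{\sys{j}})$ and pushing it through $g$, and part~(ii) is reduced to part~(i) by the same contraposition of the implication ``$\hat{\sigma}^{\sys{1}}(t)\geq\hat{\sigma}^{\sys{2}}(t)$ implies $\sigma^{\sys{1}}(t)\geq\sigma^{\sys{2}}(t)$.'' The only difference is presentational: where the paper cites \cite[Example 9.A.8]{shaked2006} for the Gaussian PQD comparison and \cite[Theorem 9.A.4]{shaked2006} for preservation under monotone maps, you prove these ingredients from first principles (Slepian's inequality plus Gaussian symmetry for the two orthants, and the Hoeffding--Plackett argument for covariance monotonicity), which is a legitimate unpacking of the same key lemmas.
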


\begin{proof}
We first prove (i).
From \cite[Example 9.A.8]{shaked2006}, $\hat{\sigma}^{\sys{1}}(t) \leq \hat{\sigma}^{\sys{2}}(t)$
implies
\[
(Z_0^{\sys{1}}, Z_{\tau}^{\sys{1}}, \ldots, Z_{m\tau}^{\sys{1}}) \leq_{\PQD}
(Z_0^{\sys{2}}, Z_{\tau}^{\sys{2}}, \ldots, Z_{m\tau}^{\sys{2}}),
\]
for $m=0,1,\ldots$. 
Because the PQD order is preserved under transformation by
the monotone function $g$ (\cite[Theorem 9.A.4]{shaked2006}), we can
verify from (\ref{eq:D_n-by-X}) and (\ref{eq:X_t-by-Z}) that
(\ref{eq:D-PQD}) holds for $m=0,1,\ldots$.
Therefore, we obtain (i) from Theorem \ref{theorem:PQD-AoI}. 

Next, we prove (ii). Similarly to (i), it can be readily shown that 
$\hat{\sigma}^{\sys{1}}(t) \geq \hat{\sigma}^{\sys{2}}(t)$ implies $(X_0^{\sys{1}}, X_t^{\sys{1}}) \geq_{\PQD} (X_0^{\sys{2}}, X_t^{\sys{2}})$. Therefore, we have
\[
\hat{\sigma}^{\sys{1}}(t) \geq \hat{\sigma}^{\sys{2}}(t)
\ \Rightarrow\
\sigma^{\sys{1}}(t) \geq \sigma^{\sys{2}}(t).
\]
Since the contraposition 
\[
\sigma^{\sys{1}}(t) < \sigma^{\sys{2}}(t)
\ \Rightarrow\
\hat{\sigma}^{\sys{1}}(t) < \hat{\sigma}^{\sys{2}}(t),
\]
is also valid, we obtain (ii) from (i). 
\end{proof}

Let $g^{-1}(y) := \inf\{x;\, g(x) > y\}$ ($y \geq 0$) denote the inverse function of $g$.
The following theorem shows that $\Pr(A_t > x)$ can be computed by multiple integrations of the joint density function of multi-dimensional normal distributions: 
\begin{theorem}
\label{thm:AoI-CCDF-Gauss}
The complementary CDF of the AoI at time $t$ ($t \geq 0$) is given by
\begin{align}
\Pr(A_t > x)
=
\left\{
\begin{array}{@{}l@{\;\;}l}
1, & x < \phi_t
\\[1ex]
\ds
\Pr\left(
\bigcap_{i=\theta_t(x)}^{k_t}\{Z_{i\tau}>a_{i,t}\}
\right),
& x \geq \phi_t,
\end{array}
\right.
\label{eq:AoI-CCDF-Gauss}
\end{align}
where $a_{i,t}$ is defined as 
\[
a_{i,t} = g^{-1}\bigl((k_t-i)\tau+\phi_t\bigr).
\]
\end{theorem}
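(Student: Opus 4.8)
The plan is to start from the general expression in Theorem~\ref{theorem:AoI-CCDF} and rewrite each delay event in terms of the underlying Gaussian process through the composition $D_i = X_{i\tau} = g(Z_{i\tau})$ coming from \eqref{eq:D_n-by-X} and \eqref{eq:X_t-by-Z}. The branch $x < \phi_t$ is inherited verbatim, so all the work is confined to $x \geq \phi_t$, where I must establish
\[
\Pr\!\left(\bigcap_{i=\theta_t(x)}^{k_t}\{g(Z_{i\tau}) > y_i\}\right)
=
\Pr\!\left(\bigcap_{i=\theta_t(x)}^{k_t}\{Z_{i\tau} > a_{i,t}\}\right),
\qquad y_i := (k_t-i)\tau+\phi_t,
\]
with $a_{i,t} = g^{-1}(y_i)$.

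The heart of the argument is a pointwise analysis of the super-level sets of the monotone function $g$. Fixing $y \geq 0$ and writing $a := g^{-1}(y) = \inf\{z;\, g(z) > y\}$, monotonicity of $g$ yields the two inclusions $\{z > a\} \subseteq \{g(z) > y\} \subseteq \{z \geq a\}$: if $z > a$ then some point of the super-level set lies below $z$, and monotonicity forces $g(z) > y$; conversely $g(z) > y$ places $z$ in that set, whence $z \geq a$ by definition of the infimum. Hence the events $\{g(Z_{i\tau}) > y_i\}$ and $\{Z_{i\tau} > a_{i,t}\}$ agree except possibly on the boundary $\{Z_{i\tau} = a_{i,t}\}$.

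To upgrade this to the intersection, I would note that the symmetric difference of the two intersection events is contained in $\bigcup_{i=\theta_t(x)}^{k_t} \{Z_{i\tau} = a_{i,t}\}$: any outcome in one intersection but not the other must satisfy $Z_{i\tau} = a_{i,t}$ for some index $i$. Each $Z_{i\tau}$ has a non-degenerate Gaussian marginal (unit variance, $\hat{\sigma}(0)=1>0$), so $\Pr(Z_{i\tau} = a_{i,t}) = 0$, and a union bound over the finitely many indices $i \in \{\theta_t(x),\ldots,k_t\}$ shows the exceptional set is null. The two intersections therefore carry equal probability, which is exactly \eqref{eq:AoI-CCDF-Gauss}.

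The main obstacle is conceptual rather than computational: because $g$ may be flat or discontinuous at the threshold, the generalized inverse does not deliver a literal set identity $\{g(Z) > y\} = \{Z > g^{-1}(y)\}$, and the two events can genuinely disagree at the boundary point. The resolution hinges entirely on the continuity of the Gaussian marginals, which annihilates these boundary discrepancies; crucially, only marginal non-degeneracy is invoked through the union bound, so no assumption on the joint covariance structure is required.
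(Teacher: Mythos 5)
Your proof is correct and takes essentially the same route as the paper: substitute $D_i = g(Z_{i\tau})$ into Theorem~\ref{theorem:AoI-CCDF} and replace each event $\{g(Z_{i\tau}) > (k_t-i)\tau+\phi_t\}$ by $\{Z_{i\tau} > a_{i,t}\}$. The only difference is one of rigor: the paper performs this replacement as if it were a literal set identity, whereas you correctly observe that with the generalized inverse $g^{-1}(y)=\inf\{x;\,g(x)>y\}$ the two events may disagree on the boundary $\{Z_{i\tau}=a_{i,t}\}$, and you dispose of these discrepancies via the atomlessness of the Gaussian marginals and a finite union bound --- a detail the paper's own proof silently skips.
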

\begin{proof}
Using (\ref{eq:X_t-by-Z}), we rewrite (\ref{eq:AoI-CCDF}) as
\begin{align*}
\Pr(A_t > x)
&=
\left\{
\begin{array}{@{}l@{\;\;}l}
1, & x < \phi_t
\\
\ds
\Pr\left(
\bigcap_{i=\theta_t(x)}^{k_t}\!\!
\{g(Z_{i\tau}) > (k_t-i)\tau+\phi_t\} 
\right)\!\!, & x \geq \phi_t\\
\end{array}
\right.
\\
&=
\left\{
\begin{array}{@{}l@{\;\;}l}
1, & x < \phi_t
\\
\ds
\Pr\left(
\bigcap_{i=\theta_t(x)}^{k_t}\!\!
\{Z_{i\tau} > g^{-1}\bigl((k_t-i)\tau+\phi_t\}\bigr)
\right)\!\!, & x \geq \phi_t,
\end{array}
\right.
\end{align*}
which proves Theorem \ref{thm:AoI-CCDF-Gauss}. 
\end{proof}

\if0
@@@@@

\begin{lemma}\label{lem:sigma1}
If we assume that the autocovariance function $\hat{\sigma}(t)$ of $(Z_t)_{t \geq 0}$ is monotonically non-increasing with respect to $t$, then the autocovariance function $\sigma(t)=\Cov[X_0, X_t]$ of $(X_t)_{t \geq 0}$ is also a monotonically non-increasing function with respect to $t$.
\end{lemma}
\begin{proof}
By assumption, it is satisfied that $E[Z_0 Z_{t'}]\le E[Z_0 Z_t]$ for $t\le t'$.
Note that $\hat{\sigma}(0)=\Var [Z_0]$ is determined from the stationary distribution of $Z_t$, and therefore does not depend on $\kappa$. Therefore, we have $(Z_0,Z_{t'})\leq_{\PQD}(Z_0,Z_t)$. 
From \cite[Theorem 9.A.4]{shaked2006}, the inequality is rewritten as $(g(Z_0),g(Z_{t'}))\leq_{\PQD}(g(Z_0),g(Z_t))$ with a monotonically non-decreasing function $g$. 
Therefore, we have $E[g(Z_0)g(Z_{t'})]\le E[g(Z_0)g(Z_t)]$. 
If we note that $\Cov[X_0, X_t]=E[X_0X_t]-\mu^2=E[g(Z_0)g(Z_t)]-\mu^2$, then $\sigma (t)$ is a monotonically non-increasing function of $t$.
\end{proof}

@@@@@
\fi

\begin{figure*}[t]
\centering
\includegraphics[scale=0.19]{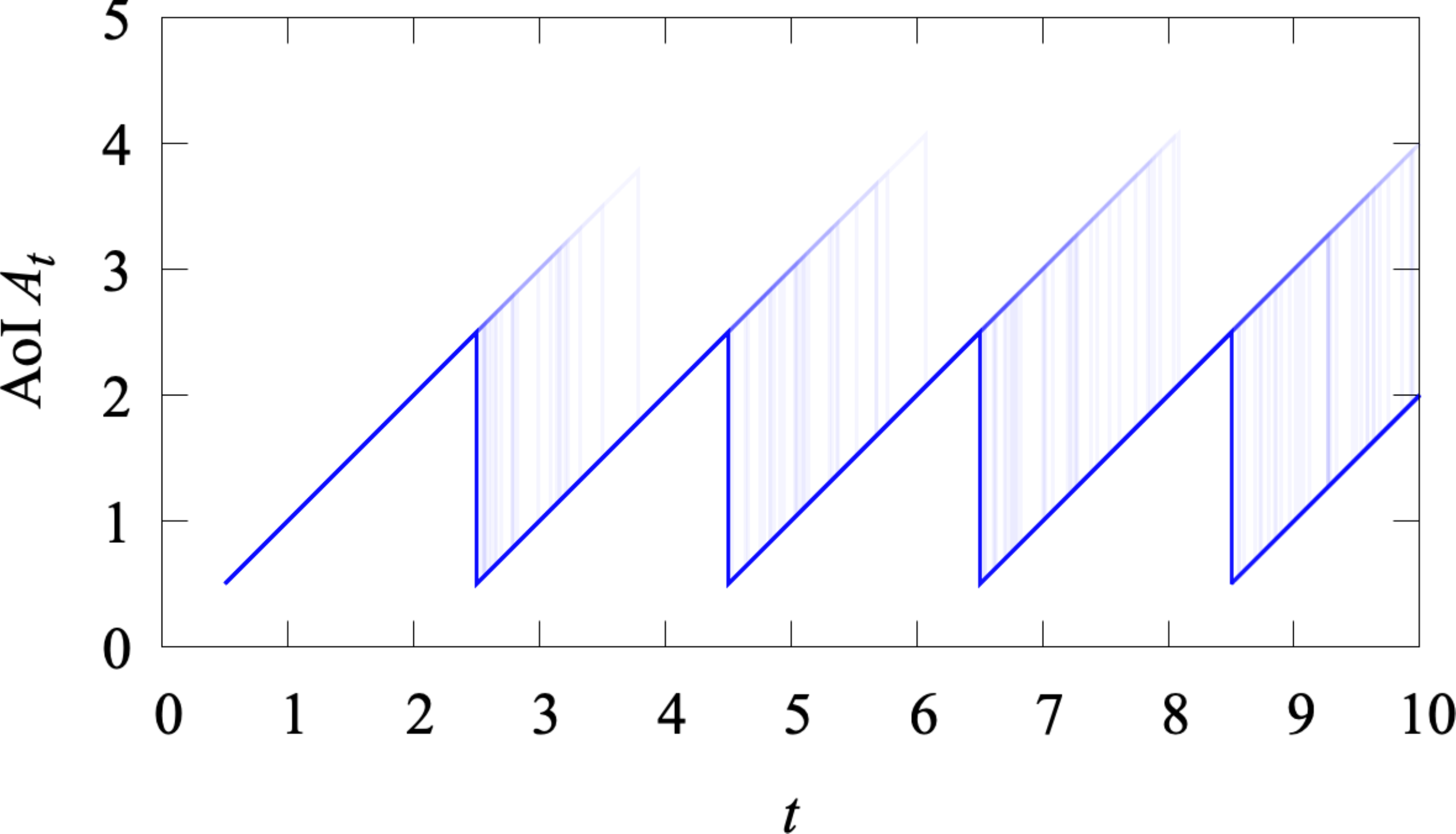} 
\includegraphics[scale=0.19]{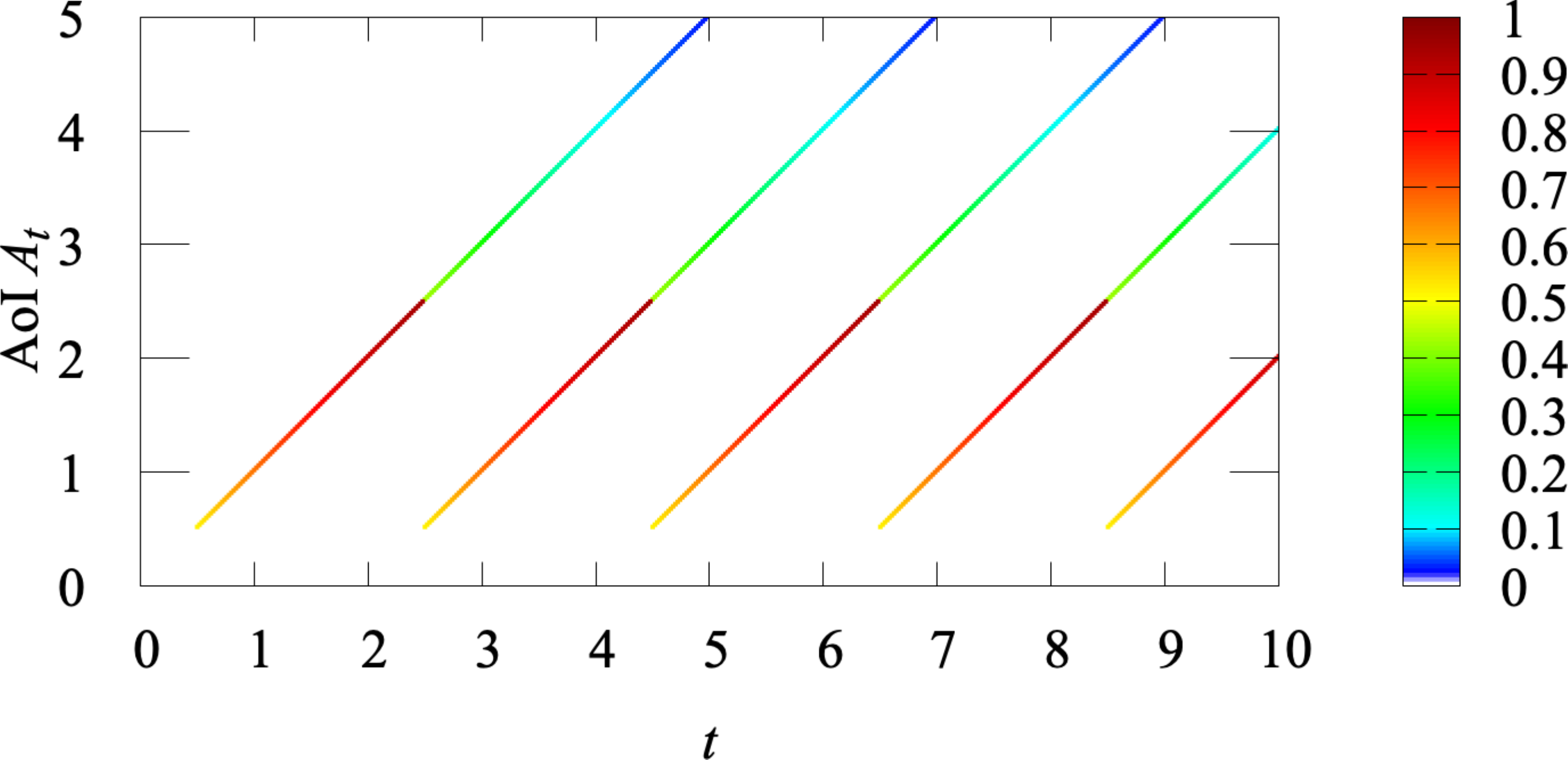} 
\caption{Simulation results of the AoI for Example 1 (left) and heat map
of probability mass function (right): $s=0.75$, $c=10$, $\tau=2.0$.}\label{fig:aoi_sim:ex1}
\includegraphics[scale=0.19]{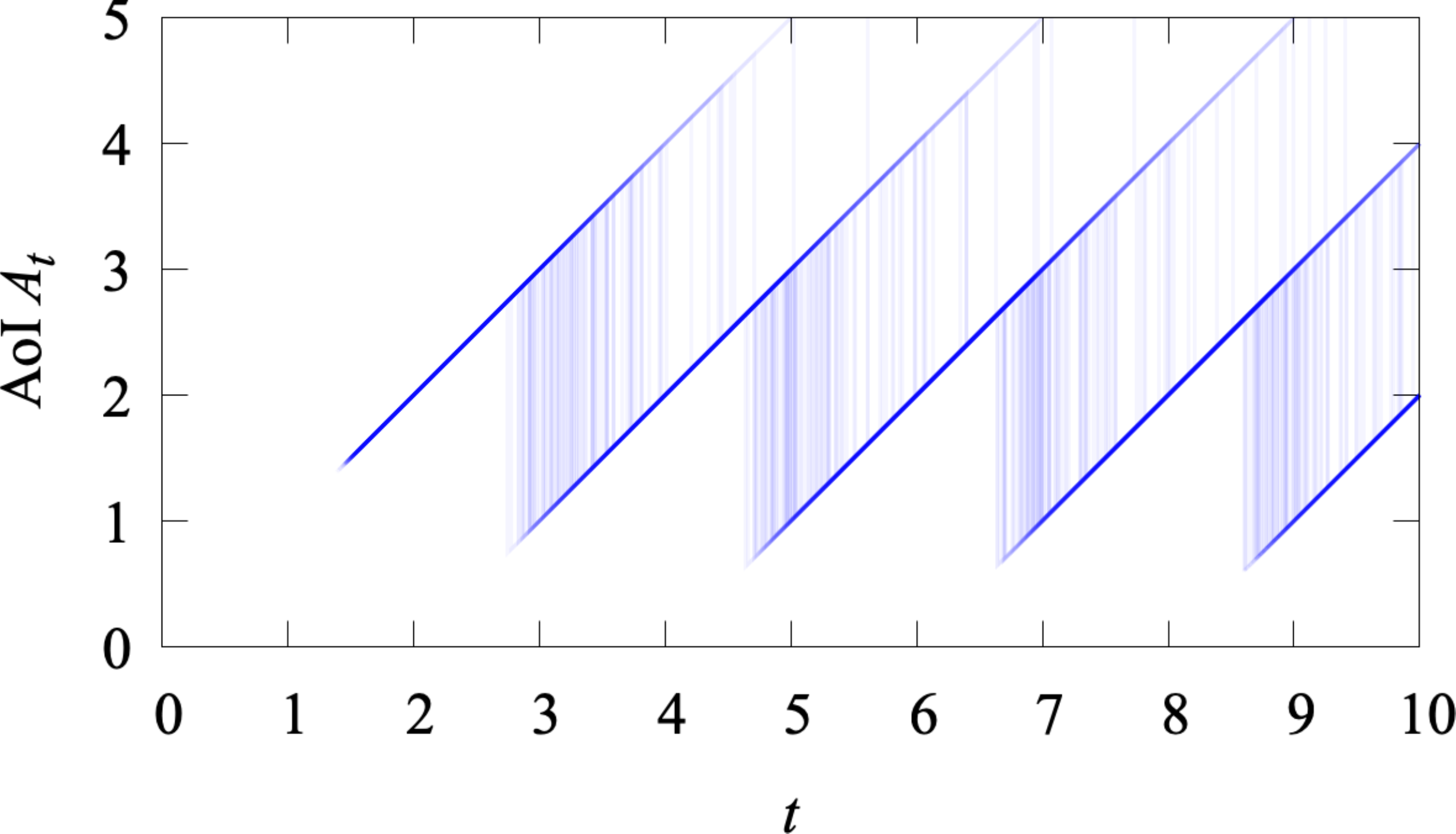} 
\includegraphics[scale=0.19]{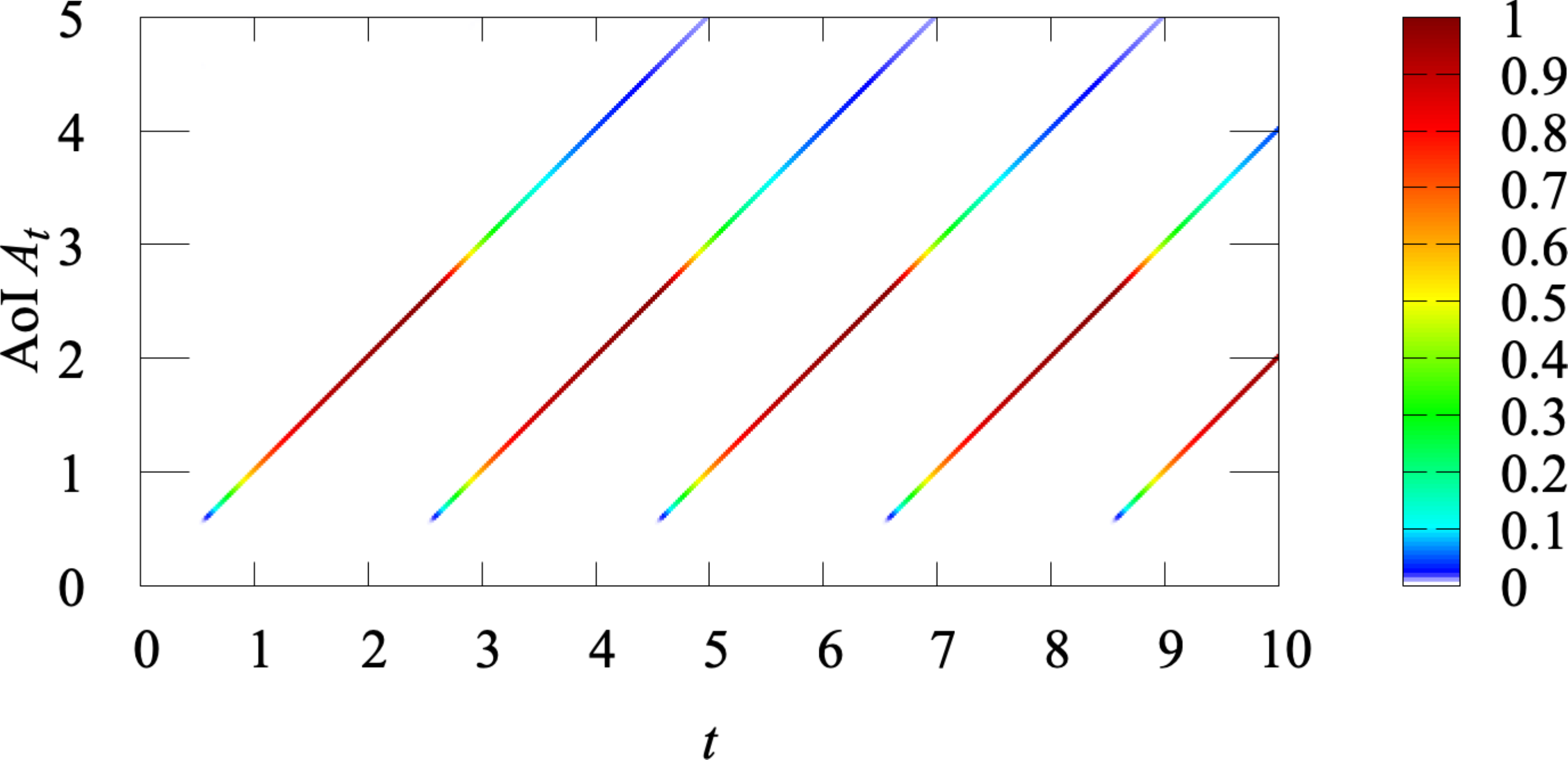} 
\caption{Simulation results of the AoI for Example 2 (left) and heat map
of probability mass function (right): $s=0.75$, $c=10$, $\tau=2.0$. }\label{fig:aoi_sim:ex2}
\end{figure*}

A well-known example of a stationary Gaussian process $(Z_t)_{t \geq 0}$ is 
the stationary Ornstein-Uhlenbeck (OU) process. 
The OU process is a Gaussian process that evolves over time according to the
following stochastic differential equation:
\[
{\textrm d}Z_t=-\kappa Z_t{\textrm d}t+ \sqrt{2\kappa}{\textrm d}W_t,
\]
where $\kappa$ denotes a positive parameter and $W_t$ denotes a
standard Wiener process.
Because the stationary OU process has the Markov property, it is also called the Gauss-Markov process. 
Notice that for any $\kappa > 0$, $Z_t$ follows a normal distribution
with mean $0$ and variance $1$, independent of $t$.

Moreover, conditional on $Z_0 = z$, the transient value $Z_t$ follows a normal
distribution with mean $ze^{-\kappa t}$ and variance $1-e^{-2\kappa t}$. 
In other words, letting $f_{N(\tilde{\mu},\tilde{s}^2)}(x)$ denote
the density function of the normal distribution with mean $\tilde{\mu}$ and variance $\tilde{s}^2$
\[
f_{N(\tilde{\mu},\tilde{s}^2)}(x) 
:= 
\frac{1}{\sqrt{2\pi }\tilde{s}}
\exp\left(-\frac{1}{2} \cdot \left(\frac{x-\tilde{\mu}}{\tilde{s}}\right)^2 \right),
\]
we have
\begin{equation}
\frac{{\rm d}}{{\rm d}x}
\Pr(Z_t \leq x \mid Z_0 = z)
=
f_{N(ze^{-\kappa t},1-e^{-2\kappa t})}(x). 
\label{eq:OU-transient-density}
\end{equation}
Therefore, the autocovariance function of the stationary OU process is given by 
\[
\hat{\sigma}(t)=E[Z_0 Z_t] = e^{-\kappa t},
\]
which is decreasing with respect to both $t$ and $\kappa$.

\if0
@@@@@

Before closing this section, we derive some properties of the
transient AoI distribution for the case that the virtual delay process
is composed from an OU process $(Z_t)_{t \geq 0}$.

\begin{lemma} 
If the stationary Gaussian process $(Z_t)_{t \geq 0}$ is a stationary Ornstein-Uhlenbeck (OU) process, then the autocovariance function $\sigma(t)$ of $(X_t)_{t \geq 0}$ is a monotonically non-increasing function with respect to $t$ and $\kappa$. 
\end{lemma}

\begin{proof}
From Lemma \ref{lem:sigma1}, it is obvious that $(X_t)_{t \geq 0}$ is a monotonically non-increasing function with respect to $t$. 
Since the autocovariance function $\hat{\sigma}(t)$ is monotonically non-increasing with respect to $\kappa$, it is obvious that $E[Z_0 Z_t]_{\kappa'}\le E[Z_0 Z_t]_{\kappa}$ for $\kappa\le \kappa'$. 
The remaining proof is easily obtained in the same way as the proof of Lemma 1. 
\end{proof}
\begin{remark}
Note that not only $\hat{\sigma}(0)$ but also $\sigma(0)=\Var [X_0]$ are independent of $\kappa$. 
If we define $c$ ($c > 0$) as the time constant that satisfies $\sigma(c)=\sigma(0)e^{-1}$, it can be seen that $\kappa$ decreases (increases) as $c$ increases (decreases).     
\end{remark}

\begin{corollary}
\begin{itemize}
\item[(i)] When $\kappa\rightarrow\infty \ (i.e., c\rightarrow 0)$, $\Pr(A_t > x)$ is given by
\begin{align}
\lefteqn{\Pr(A_t > x)}\nonumber\\
&=\left\{
\begin{array}{@{}l@{\;\;}l}
1, & x < \phi_t
\\[1ex]
\ds
\prod_{i=\theta_t(x)}^{k_t}\Pr\left(
Z_{i\tau} > g^{-1}\bigl((k_t-i)\tau+\phi_t\bigr)\right), & x \geq \phi_t\\
\end{array}
\right.
\end{align}
\item[(ii)] When $\kappa\rightarrow 0\ (i.e., c\rightarrow \infty)$, $\Pr(A_t > x)$ is given by
\begin{align}
\lefteqn{\Pr(A_t > x)}\nonumber\\
&=\left\{
\begin{array}{@{}l@{\;\;}l}
1, & x < \phi_t
\\[1ex]
\ds
\Pr\left(
Z_{0} > g^{-1}\bigl((k_t- \theta_t(x))\tau+\phi_t\bigr)
\right), & x \geq \phi_t\\
\end{array}
\right.
\end{align}
\end{itemize}
\end{corollary}
\begin{proof}
When $\kappa\rightarrow\infty$, we have
\[
ze^{-\kappa t}\rightarrow 0,\quad 1-e^{-2\kappa t}\rightarrow 1,
\]
for $t>0$, that is, the right-hand side of (\ref{eq:OU-transient-density}) converges to $f_{N(0,1)}(x)$ (i.e., $(Z_t)_{t \geq 0}$ is i.i.d for any $t$).  
On the other hand, when $\kappa\rightarrow 0$, we have
\[
ze^{-\kappa t}\rightarrow z,\quad 1-e^{-2\kappa t}\rightarrow 0,
\]
for $t>0$, that is, $(Z_t)_{t \geq 0}$ depends only on the initial state $Z_0=z$. 
\end{proof}

@@@@@
\fi

Utilizing the Markov property of the stationary OU process $(Z_t)_{t
\geq 0}$, the computation of the AoI distribution in
(\ref{eq:AoI-CCDF-Gauss}) can be improved in efficiency. See Appendix \ref{appendix:OU-compute} for details. 

\begin{figure*}[ht]
\centering
\begin{minipage}[b]{0.49\linewidth}
\includegraphics[scale=0.17]{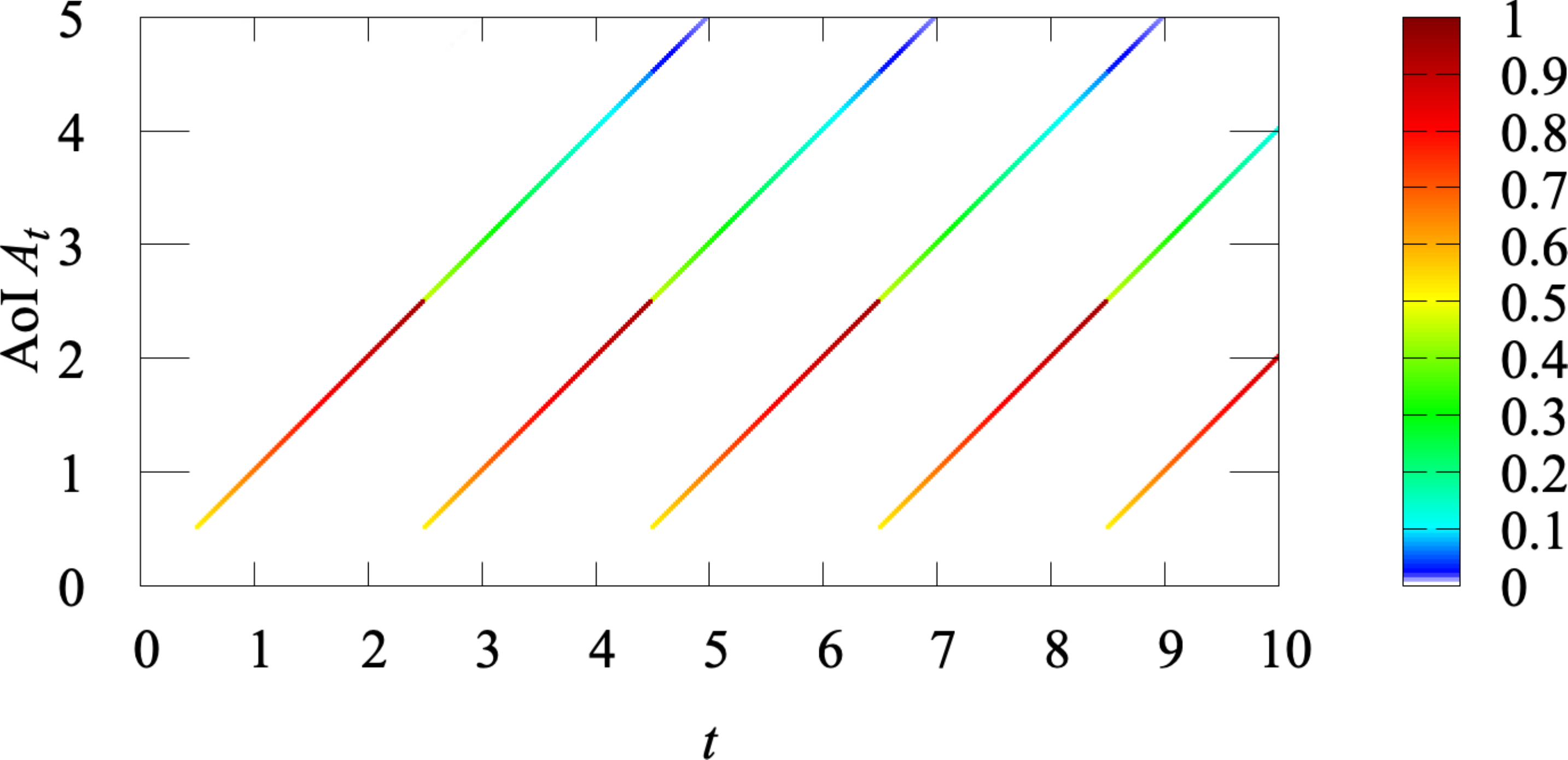} 
\subcaption{$c=0.1$, $\tau=2.0$}
\end{minipage}
\begin{minipage}[b]{0.49\linewidth}
\includegraphics[scale=0.17]{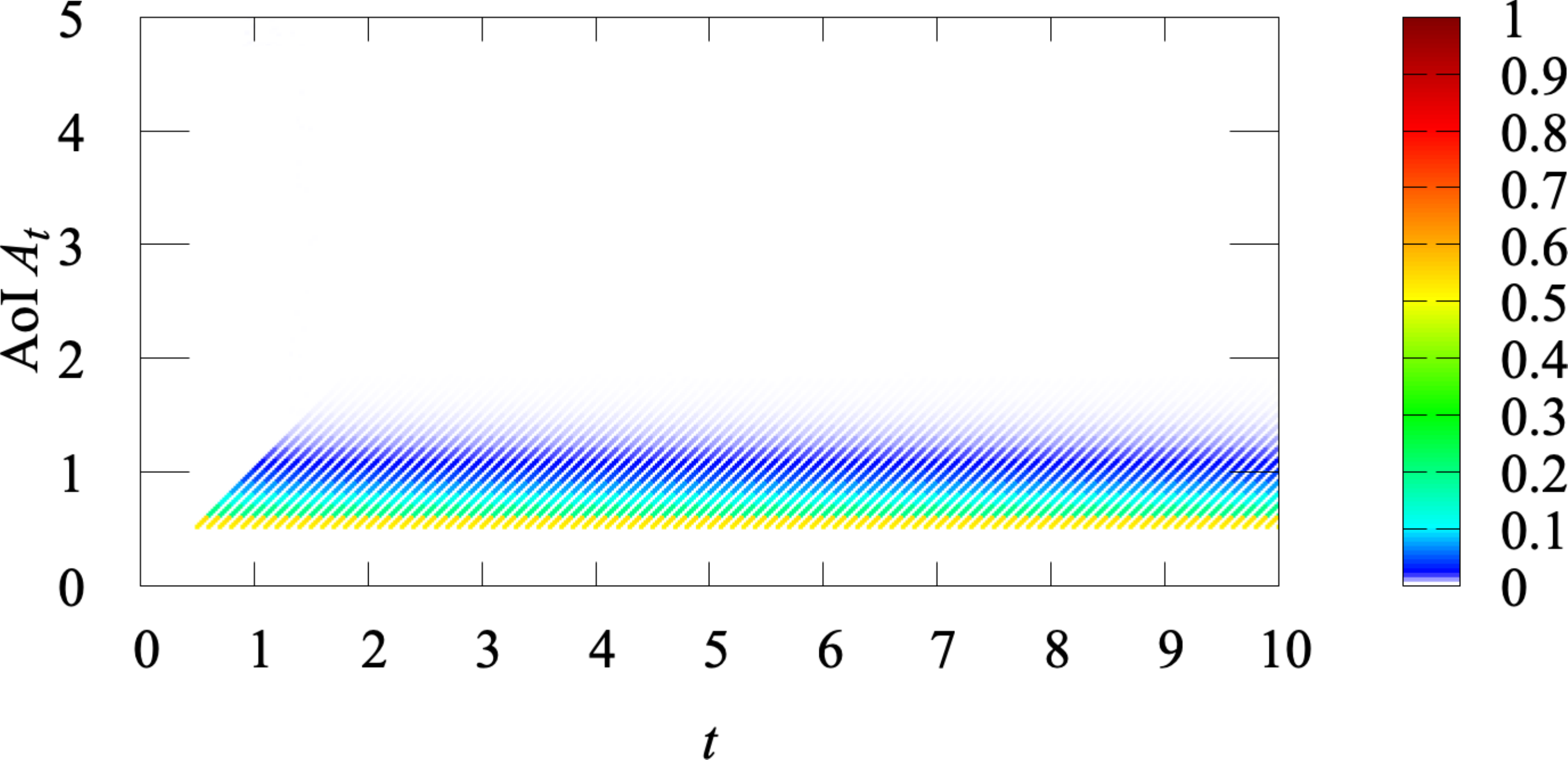} 
\subcaption{$c=0.1$, $\tau=0.1$}
\end{minipage}
\begin{minipage}[b]{0.49\linewidth}
\includegraphics[scale=0.17]{figs/r12_1hmpmf20a.pdf} 
\subcaption{$c=10.0$, $\tau=2.0$}
\end{minipage}
\begin{minipage}[b]{0.49\linewidth}
\includegraphics[scale=0.17]{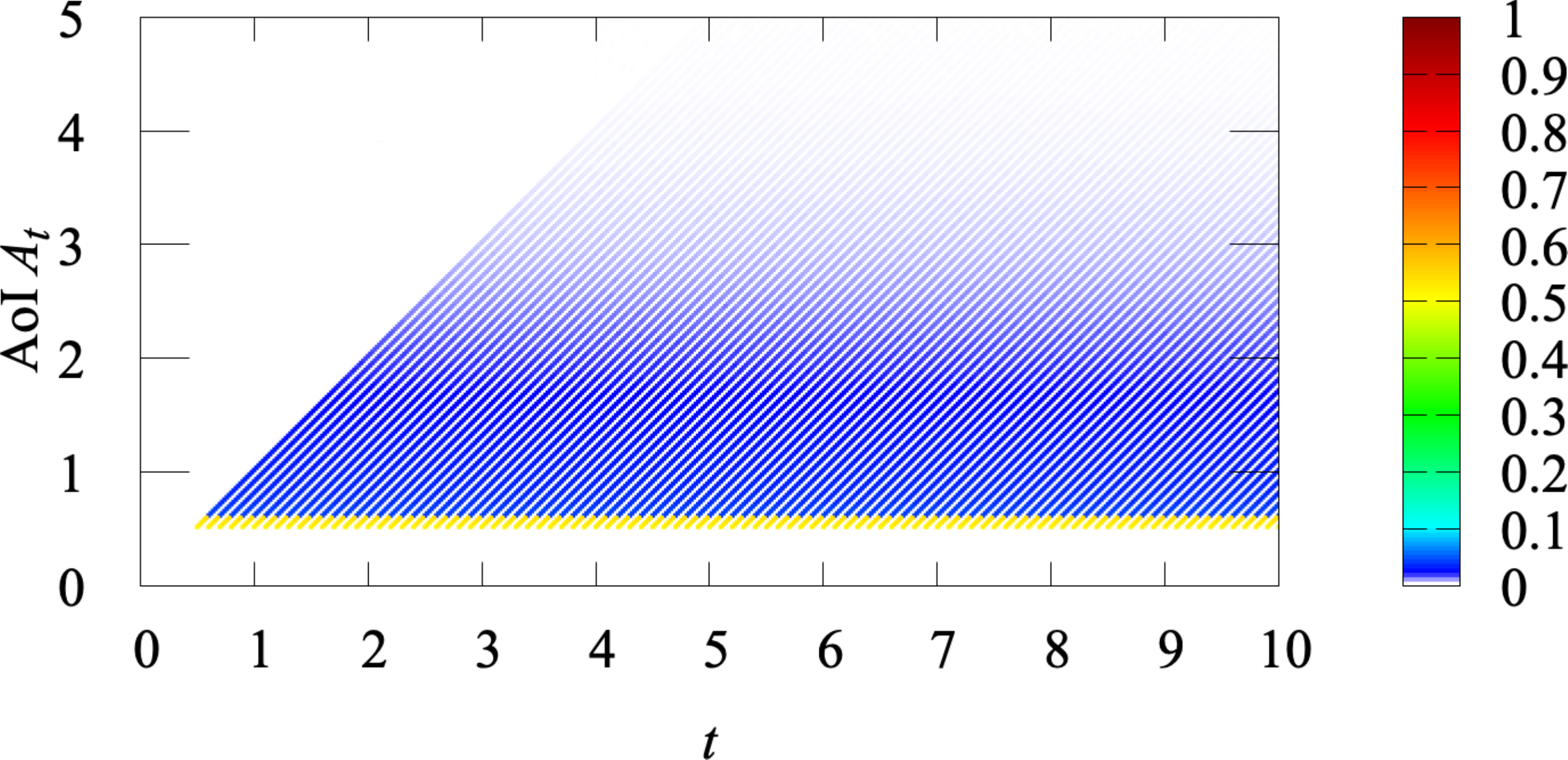} 
\subcaption{$c=10.0$, $\tau=0.1$}
\end{minipage}
\caption{Heatmap of the probability mass function of the AoI distribution in Example 1 ($s=0.75$).}\label{fig:tau_vs_heatmap3_1_A}
\end{figure*}
\begin{figure*}[ht!]
\centering
\begin{minipage}[b]{0.49\linewidth}
\includegraphics[scale=0.17]{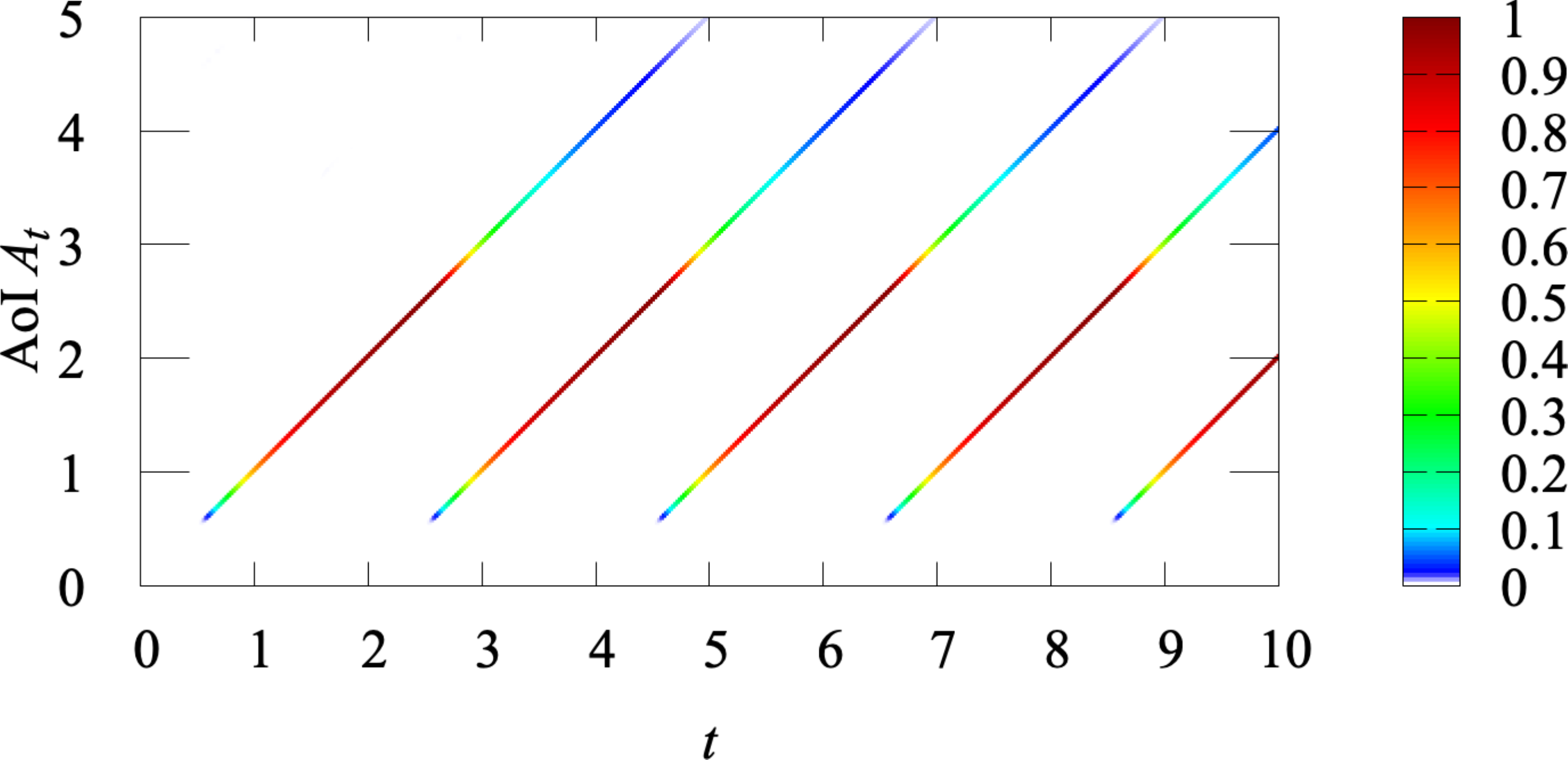} 
\subcaption{$c=0.1$, $\tau=2.0$}
\end{minipage}
\begin{minipage}[b]{0.49\linewidth}
\includegraphics[scale=0.17]{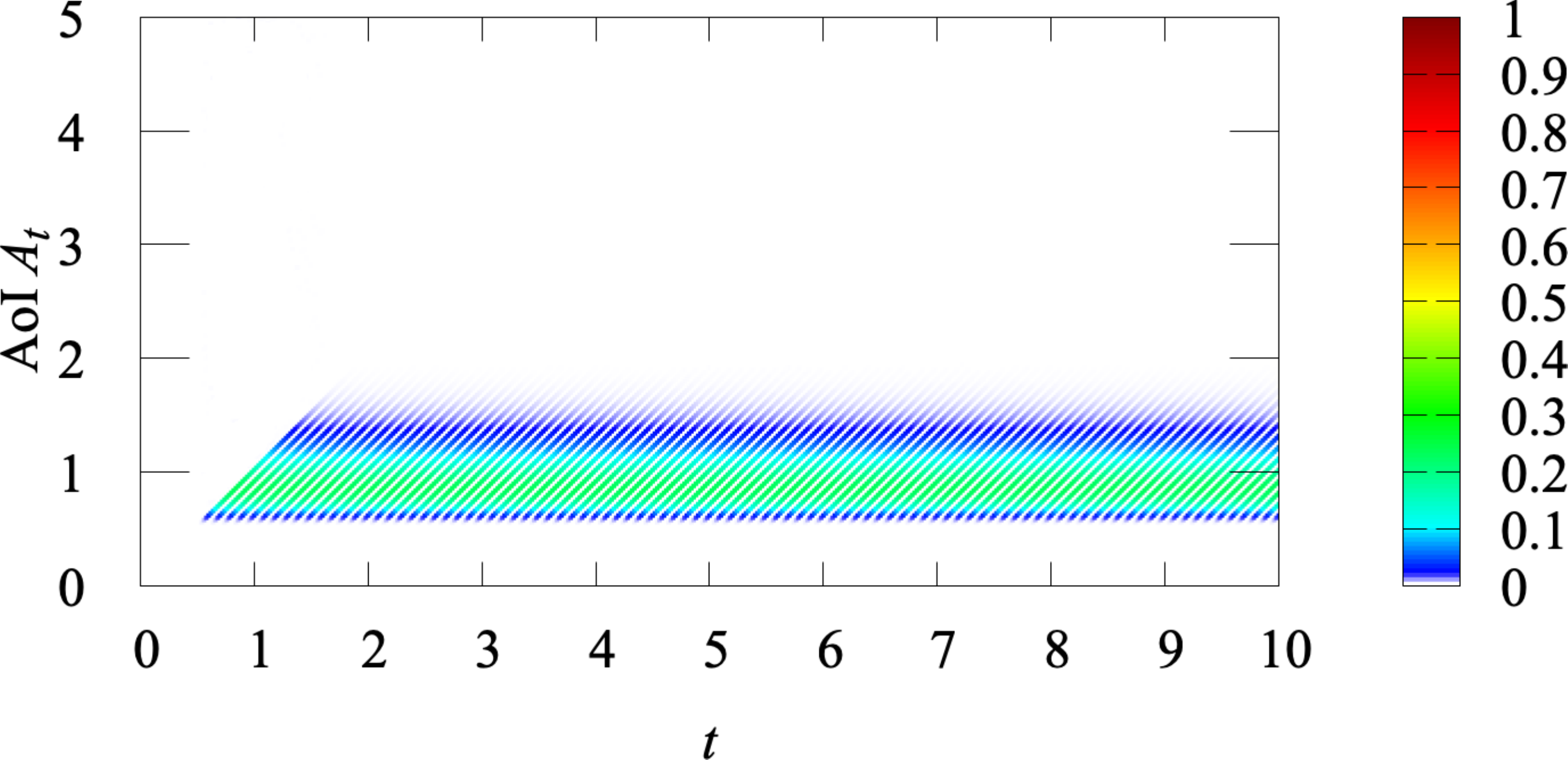} 
\subcaption{$c=0.1$, $\tau=0.1$}
\end{minipage}
\begin{minipage}[b]{0.49\linewidth}
\includegraphics[scale=0.17]{figs/r12_1hmpmf20b.pdf} 
\subcaption{$c=10.0$, $\tau=2.0$}
\end{minipage}
\begin{minipage}[b]{0.49\linewidth}
\includegraphics[scale=0.17]{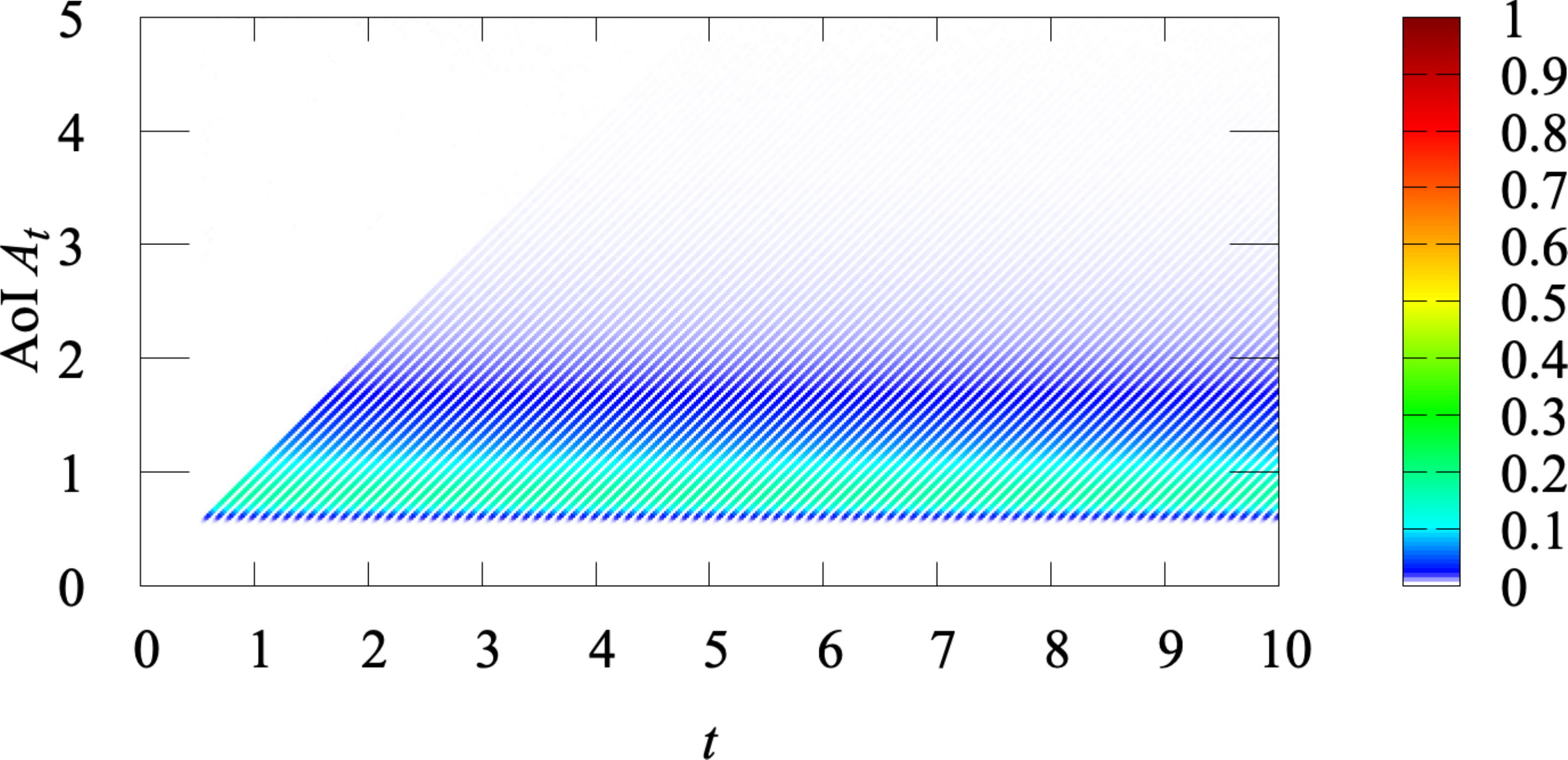} 
\subcaption{$c=10.0$, $\tau=0.1$}
\end{minipage}
\caption{Heatmap of the probability mass function of the AoI distribution in Example 2 ($s=0.75$).}\label{fig:tau_vs_heatmap3_1_B}
\end{figure*}

\begin{figure*}[ht!]
\centering
\begin{minipage}[b]{0.49\linewidth}
\includegraphics[scale=0.17]{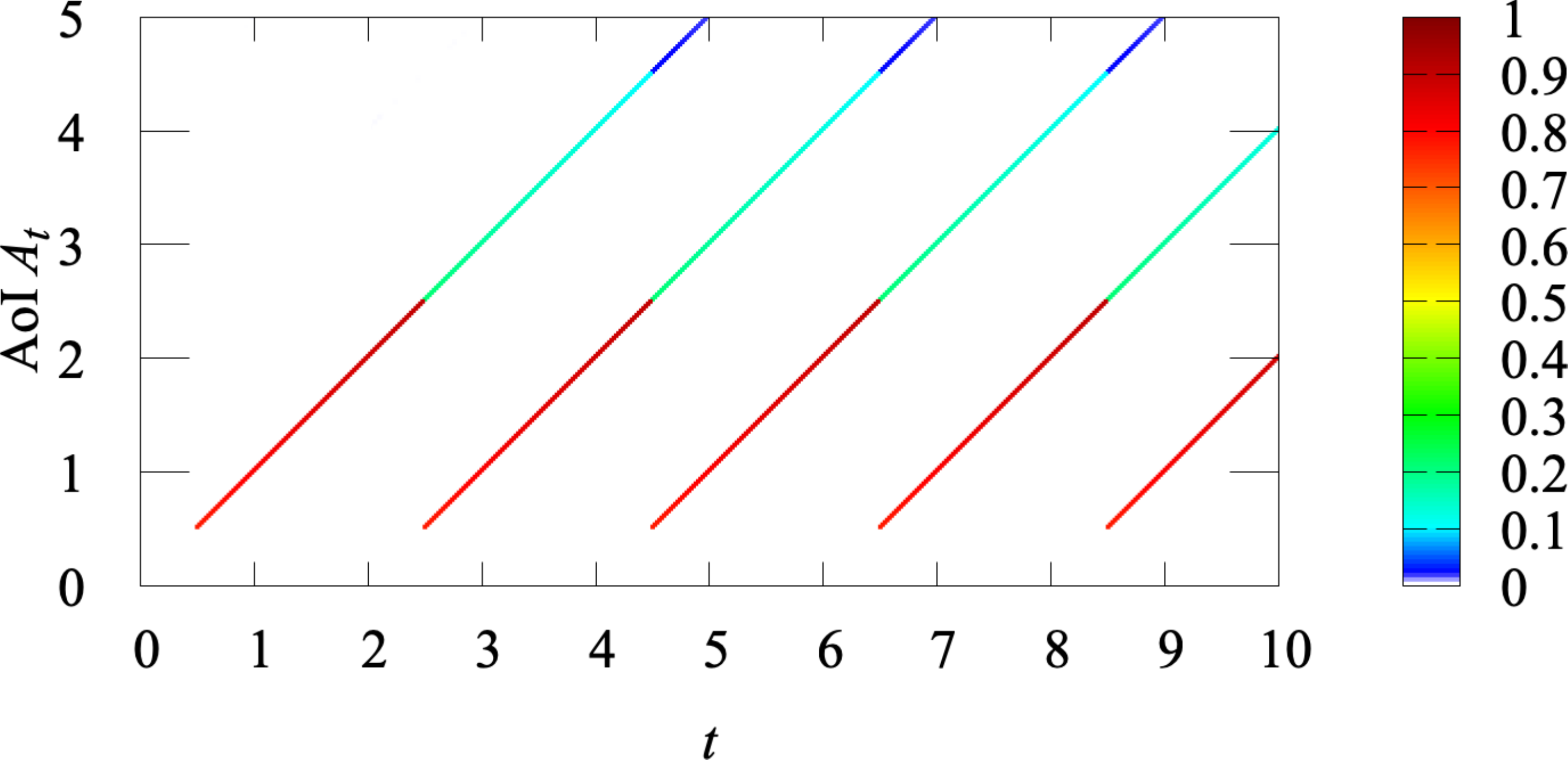} 
\subcaption{$c=0.1$, $\tau=2.0$}
\end{minipage}
\begin{minipage}[b]{0.49\linewidth}
\includegraphics[scale=0.17]{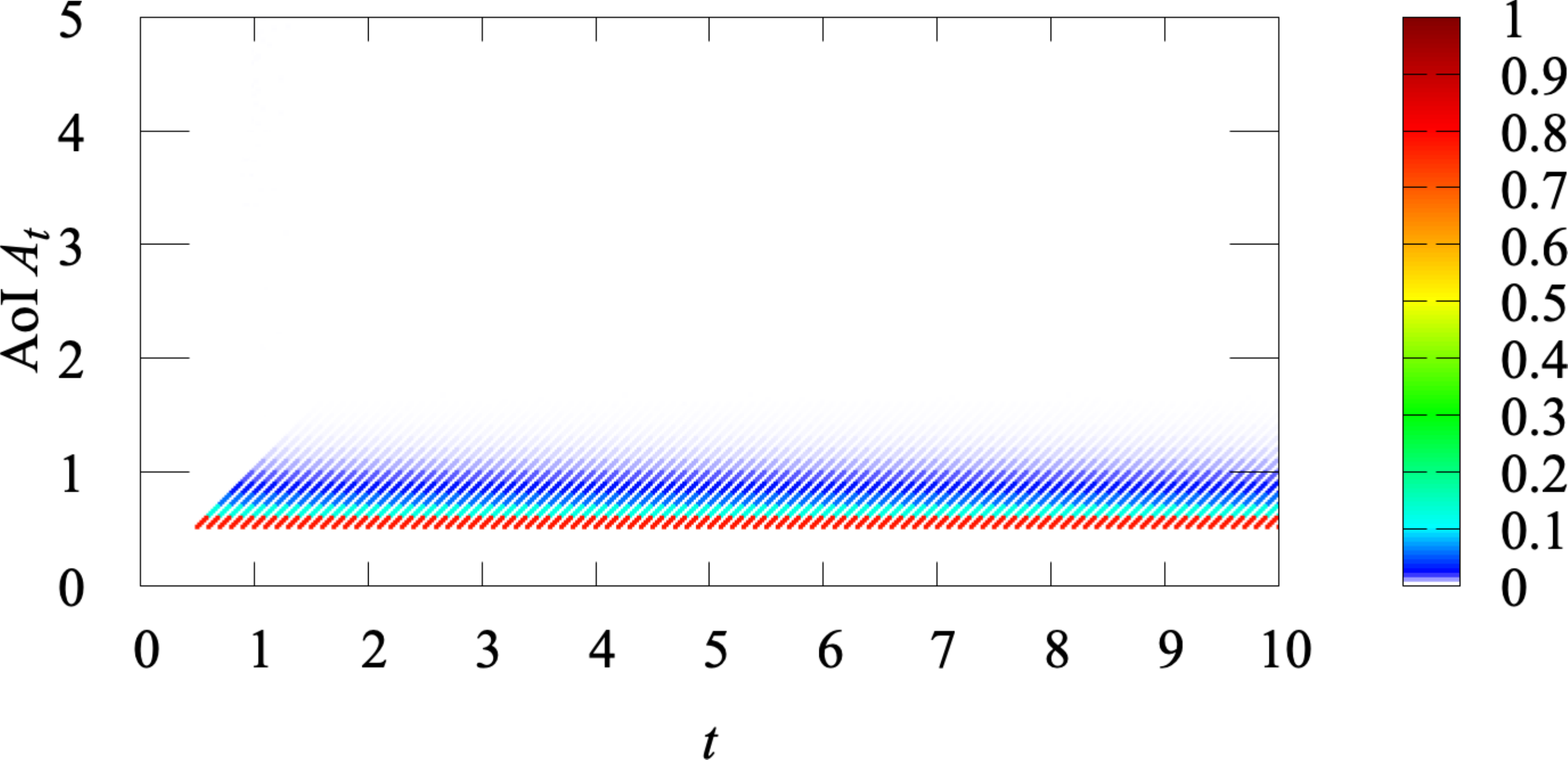} 
\subcaption{$c=0.1$, $\tau=0.1$}
\end{minipage}
\begin{minipage}[b]{0.49\linewidth}
\includegraphics[scale=0.17]{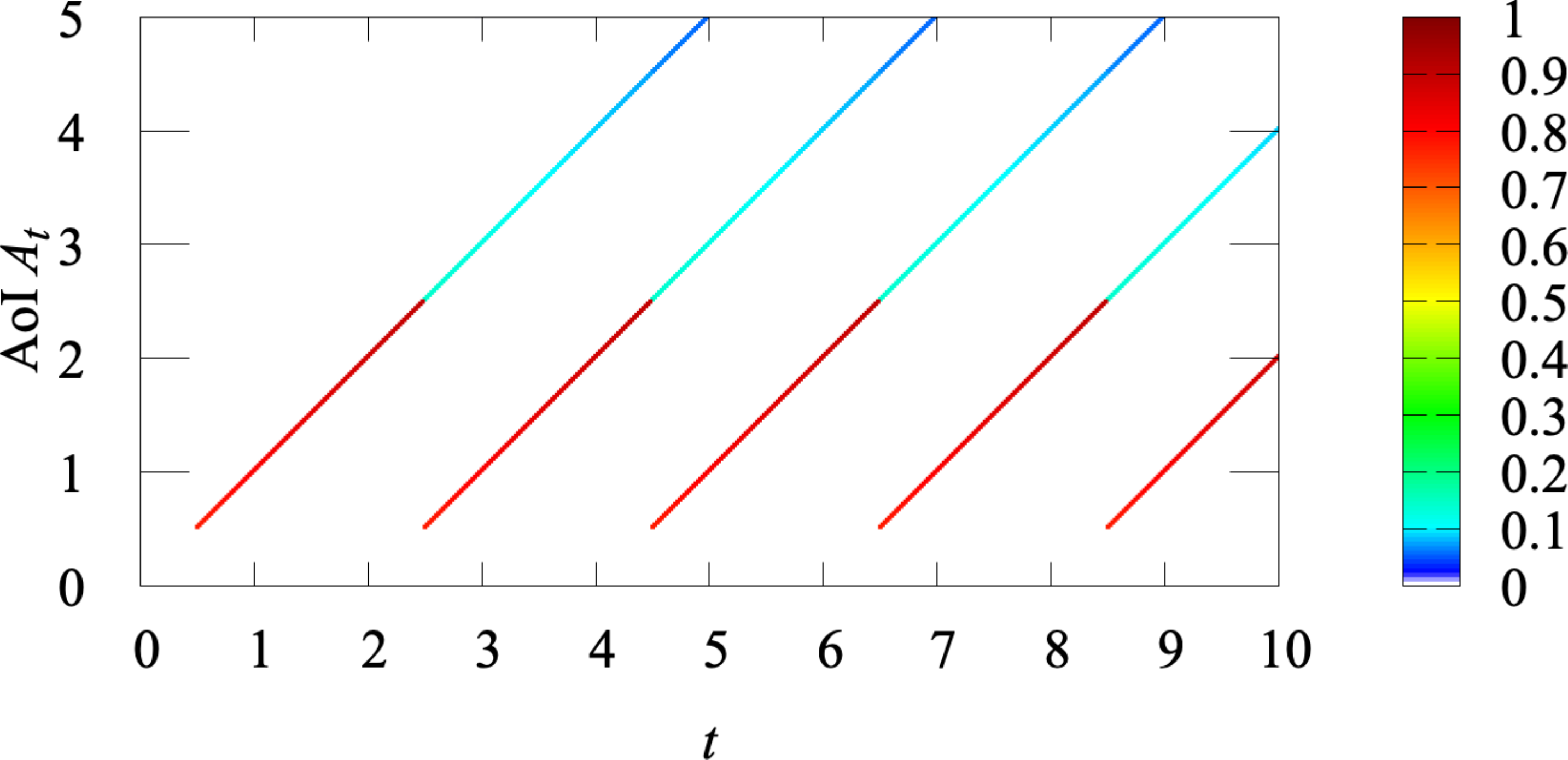} 
\subcaption{$c=10.0$, $\tau=2.0$}
\end{minipage}
\begin{minipage}[b]{0.49\linewidth}
\includegraphics[scale=0.17]{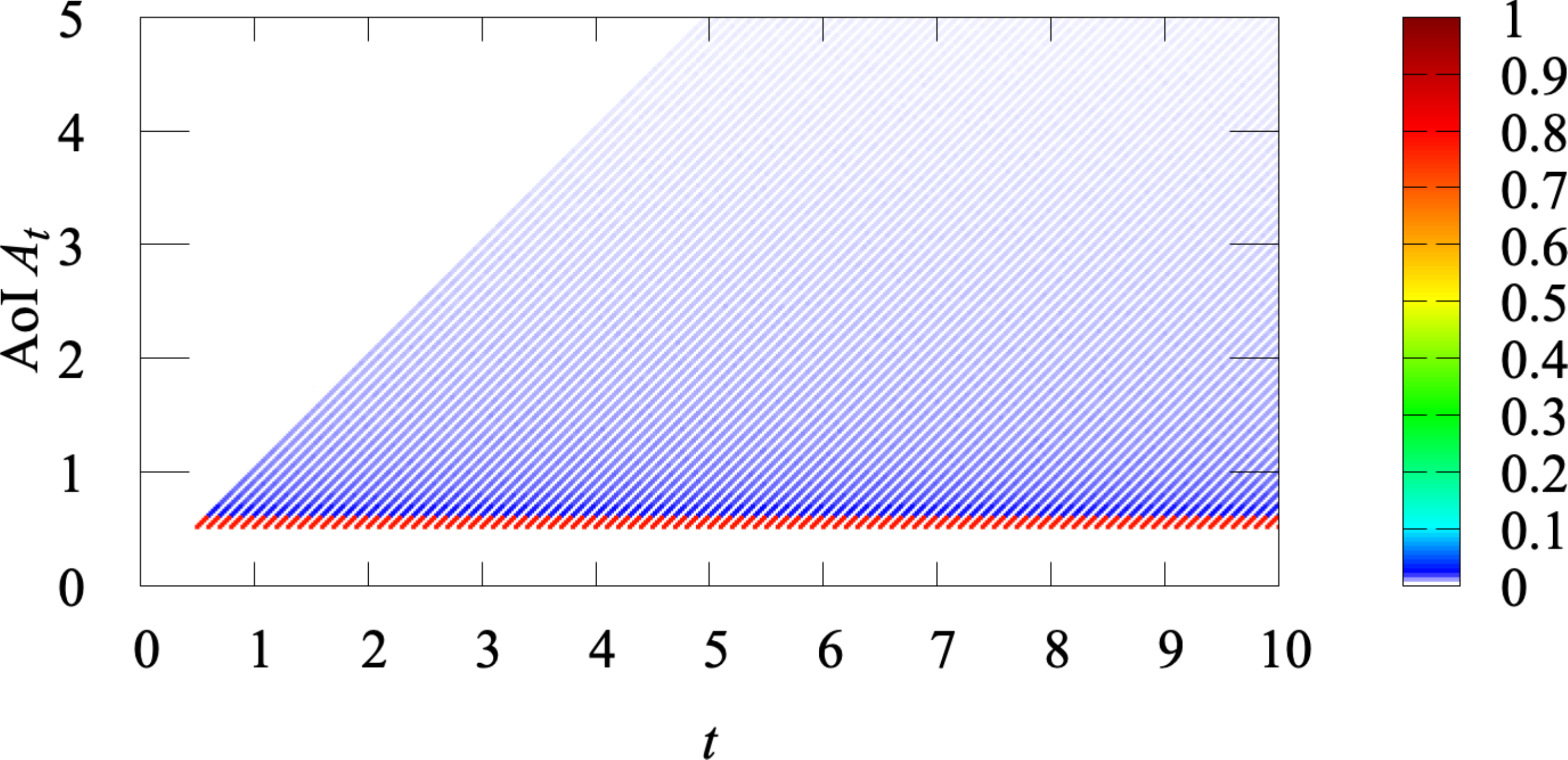} 
\subcaption{$c=10.0$, $\tau=0.1$}
\end{minipage}
\caption{Heatmap of the probability mass function of the AoI distribution in Example 1 ($s=1.25$).}\label{fig:tau_vs_heatmap3_2_A}
\end{figure*}
\begin{figure*}[ht!]
\centering
\begin{minipage}[b]{0.49\linewidth}
\includegraphics[scale=0.17]{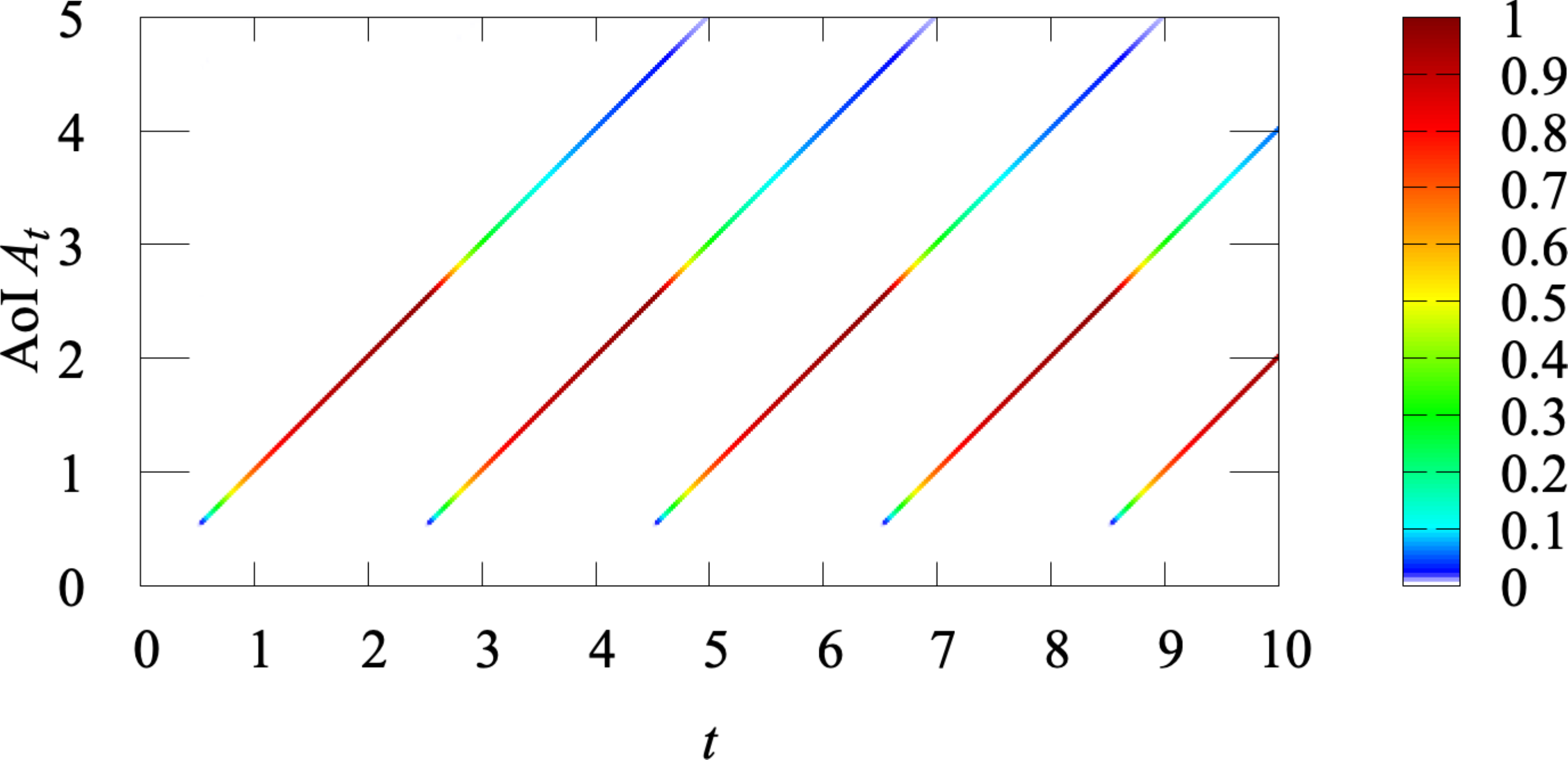} 
\subcaption{$c=0.1$, $\tau=2.0$}
\end{minipage}
\begin{minipage}[b]{0.49\linewidth}
\includegraphics[scale=0.17]{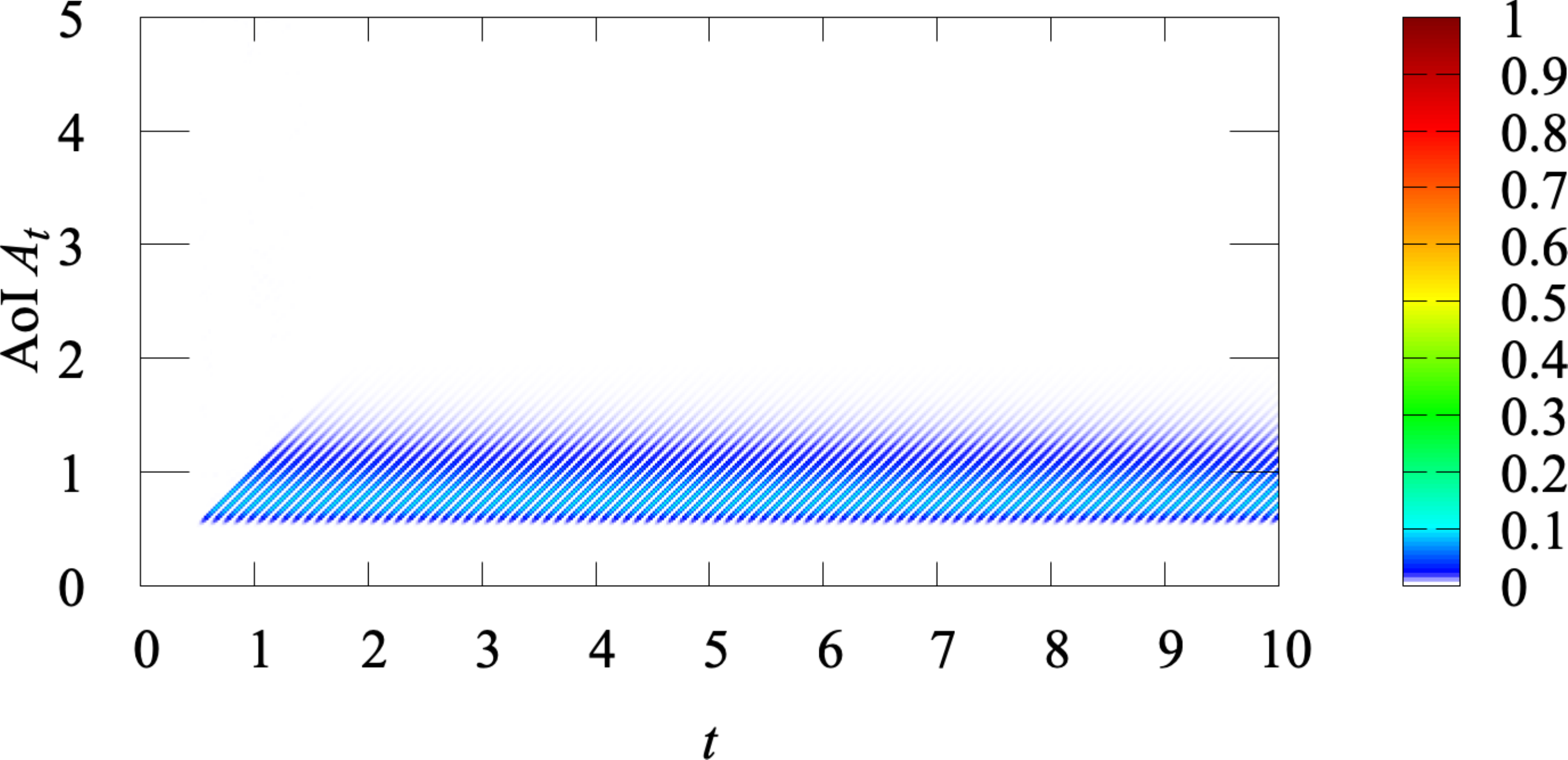} 
\subcaption{$c=0.1$, $\tau=0.1$}
\end{minipage}
\mbox{}
\\
\begin{minipage}[b]{0.49\linewidth}
\includegraphics[scale=0.17]{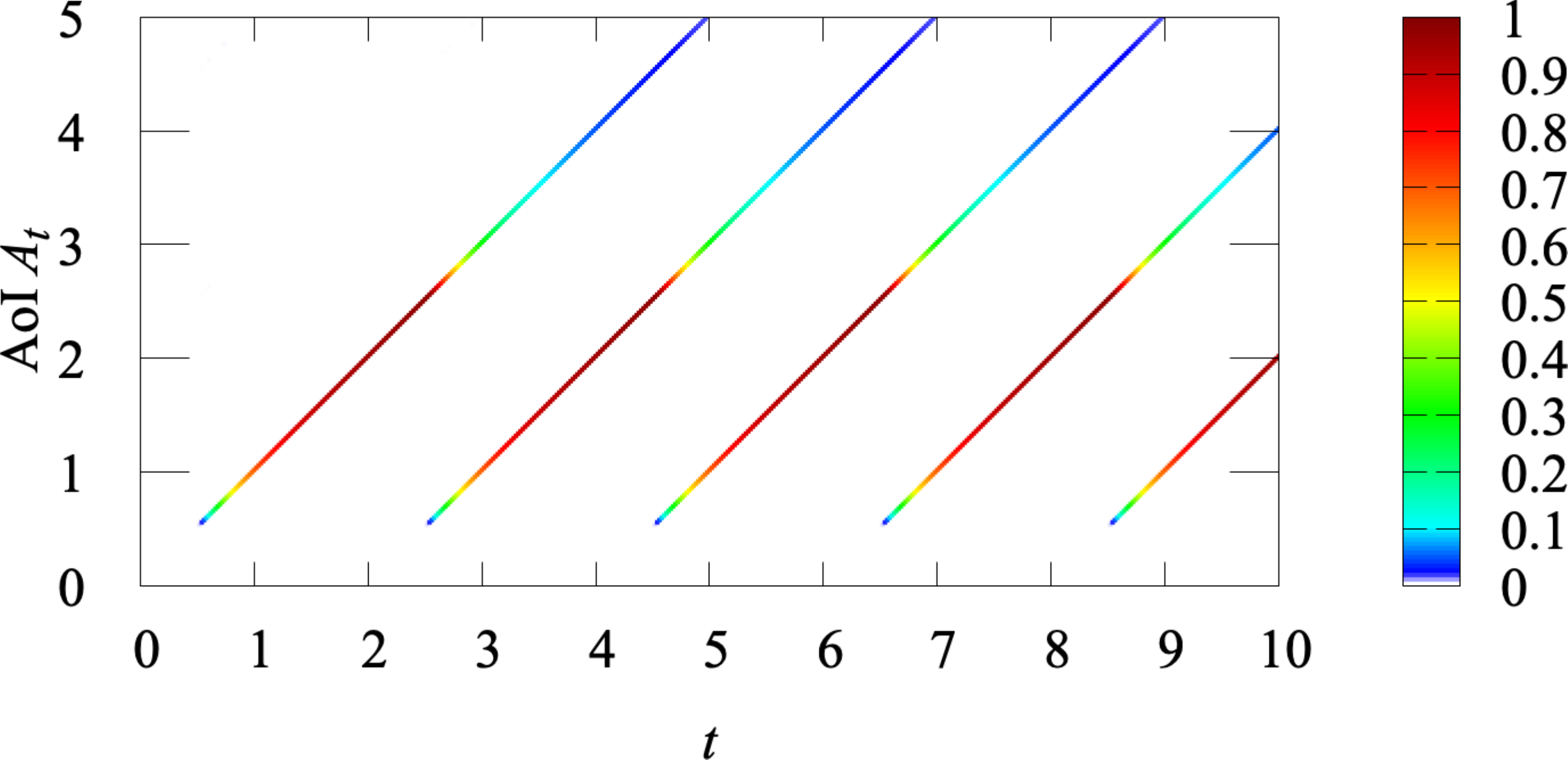} 
\subcaption{$c=10.0$, $\tau=2.0$}
\end{minipage}
\begin{minipage}[b]{0.49\linewidth}
\includegraphics[scale=0.17]{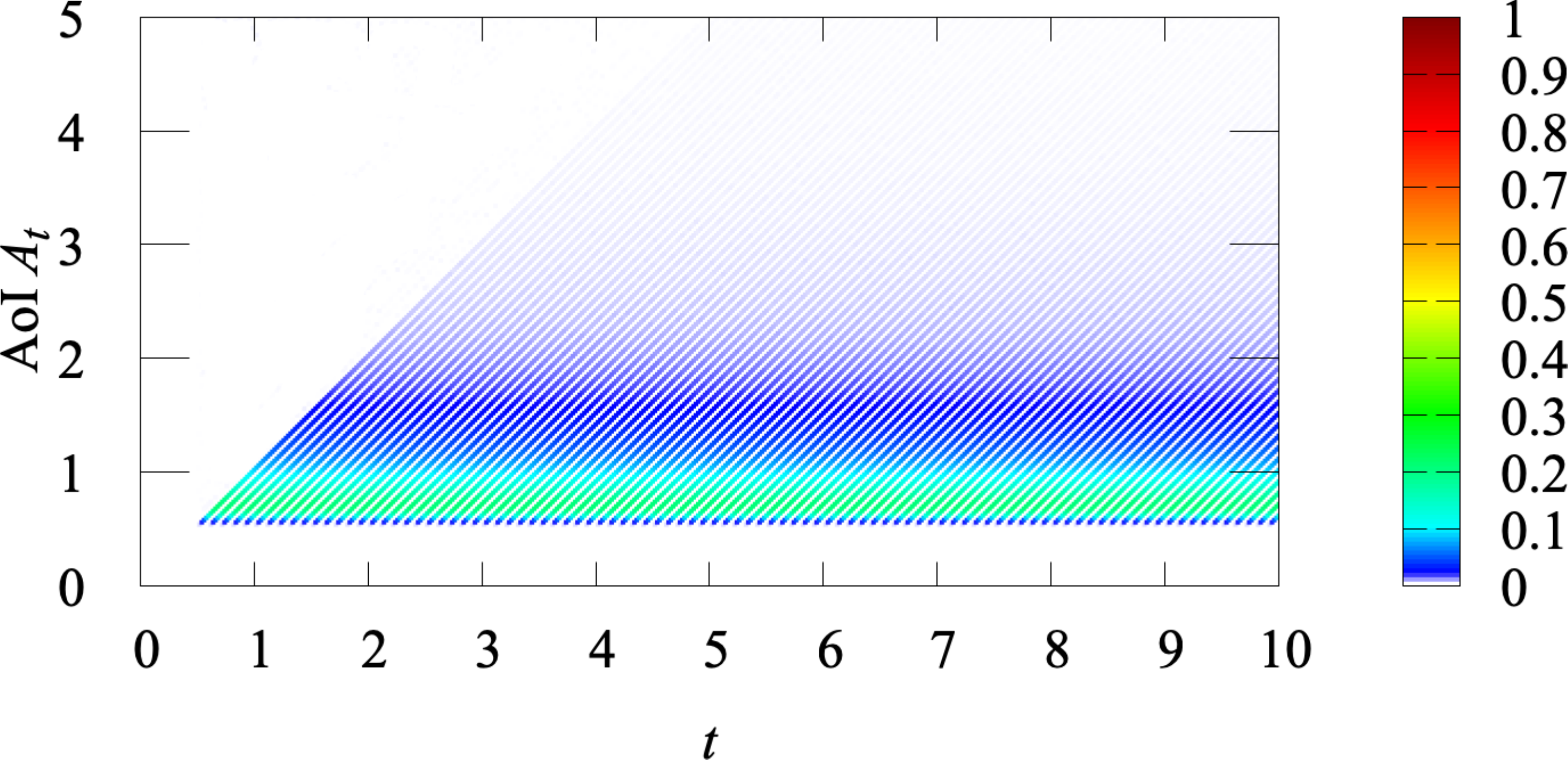} 
\subcaption{$c=10.0$, $\tau=0.1$}
\end{minipage}
\caption{Heatmap of the probability mass function of the AoI distribution in Example 2 ($s=1.25$).}\label{fig:tau_vs_heatmap3_2_B}
\end{figure*}
\begin{figure*}[p]
\centering
\begin{minipage}[b]{0.49\linewidth}
\includegraphics[scale=0.3]{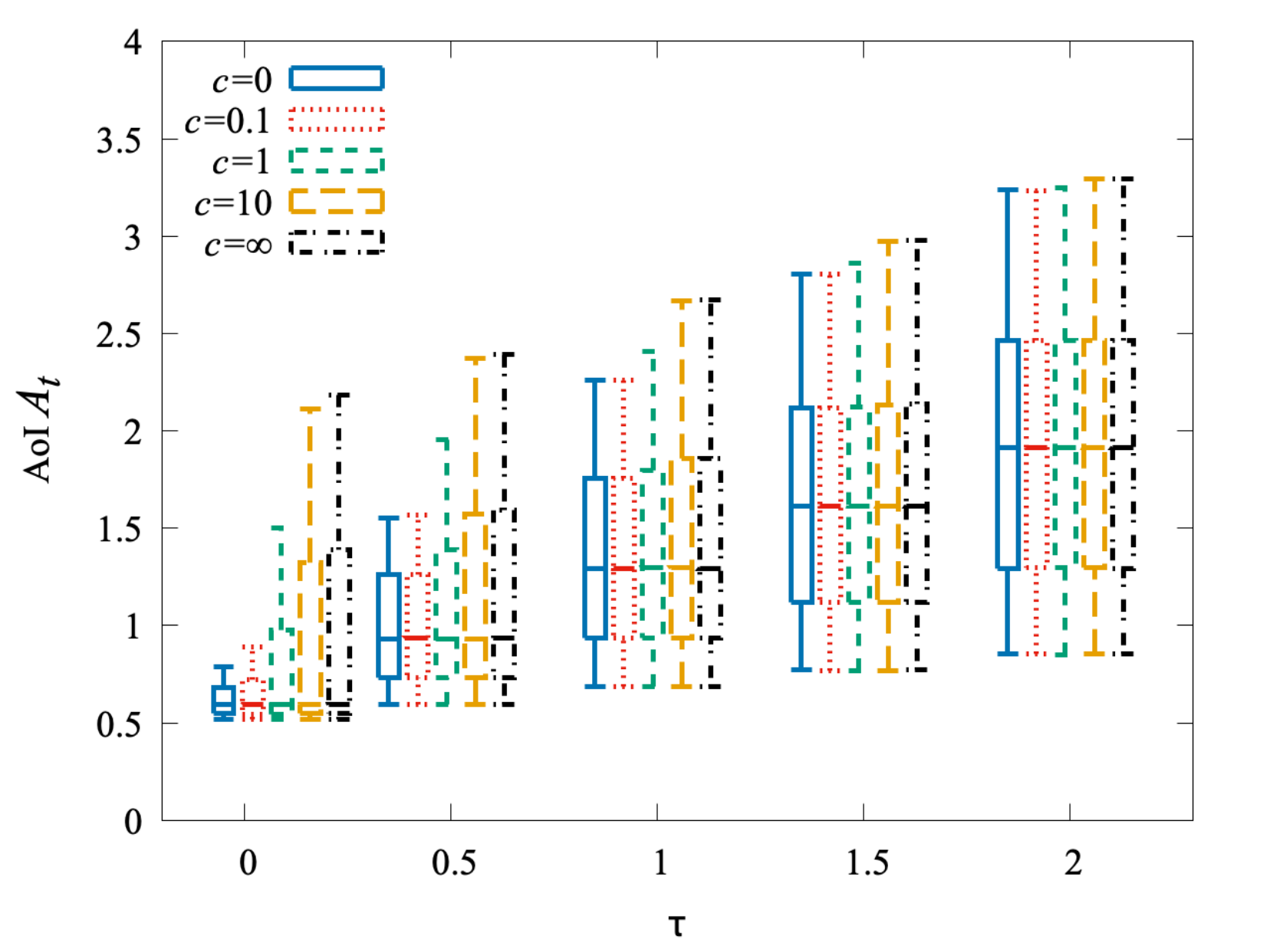} 
\subcaption{$s=0.75$}
\end{minipage}
\begin{minipage}[b]{0.49\linewidth}
\includegraphics[scale=0.3]{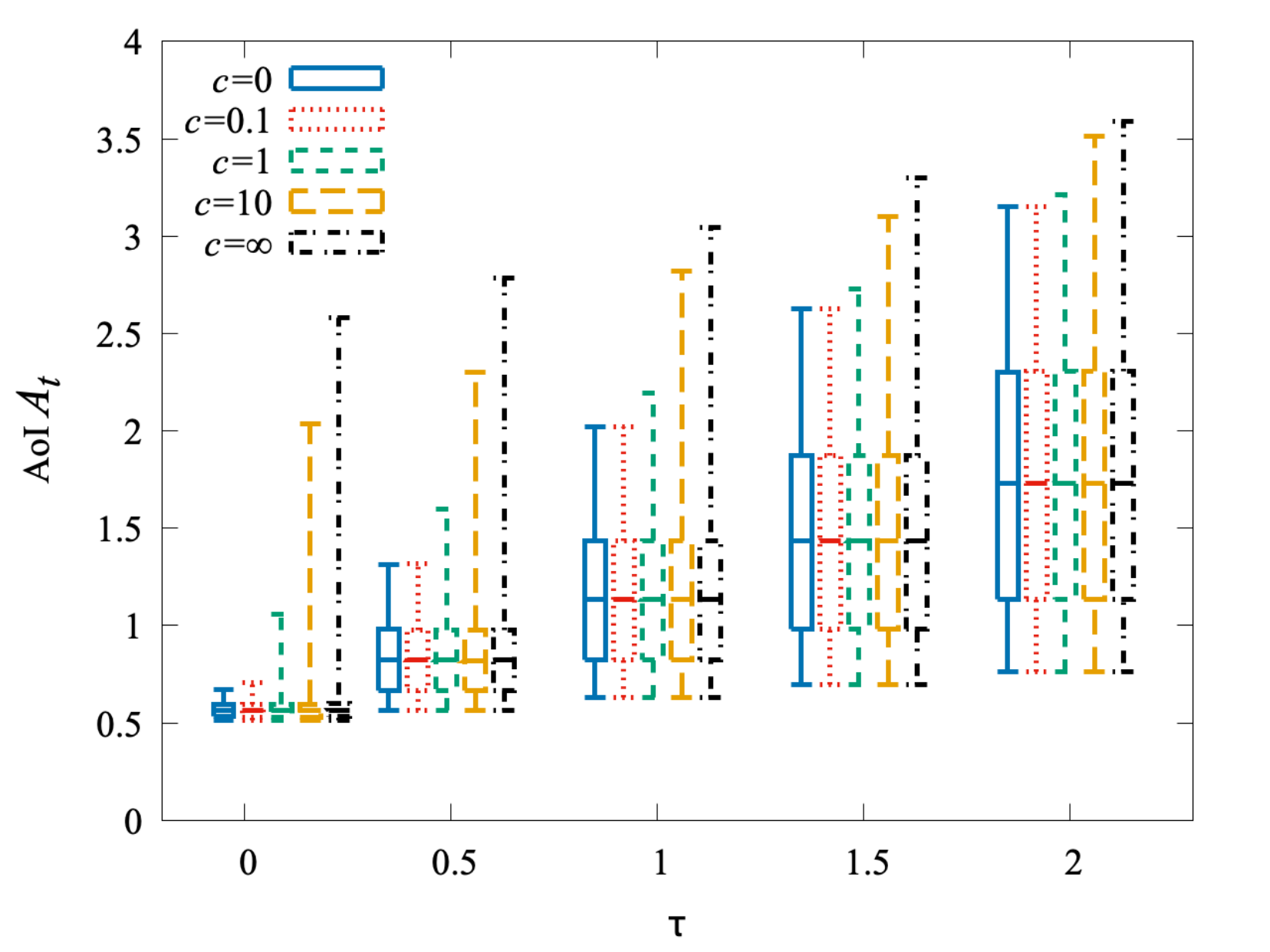} 
\subcaption{$s=1.25$}
\end{minipage}
\caption{Time-averaged distribution of the AoI for packet generation period $\tau$ in Example 1.}\label{fig:tau_vs_aoi3}
\mbox{}
\\
\centering
\begin{minipage}[b]{0.49\linewidth}
\includegraphics[scale=0.3]{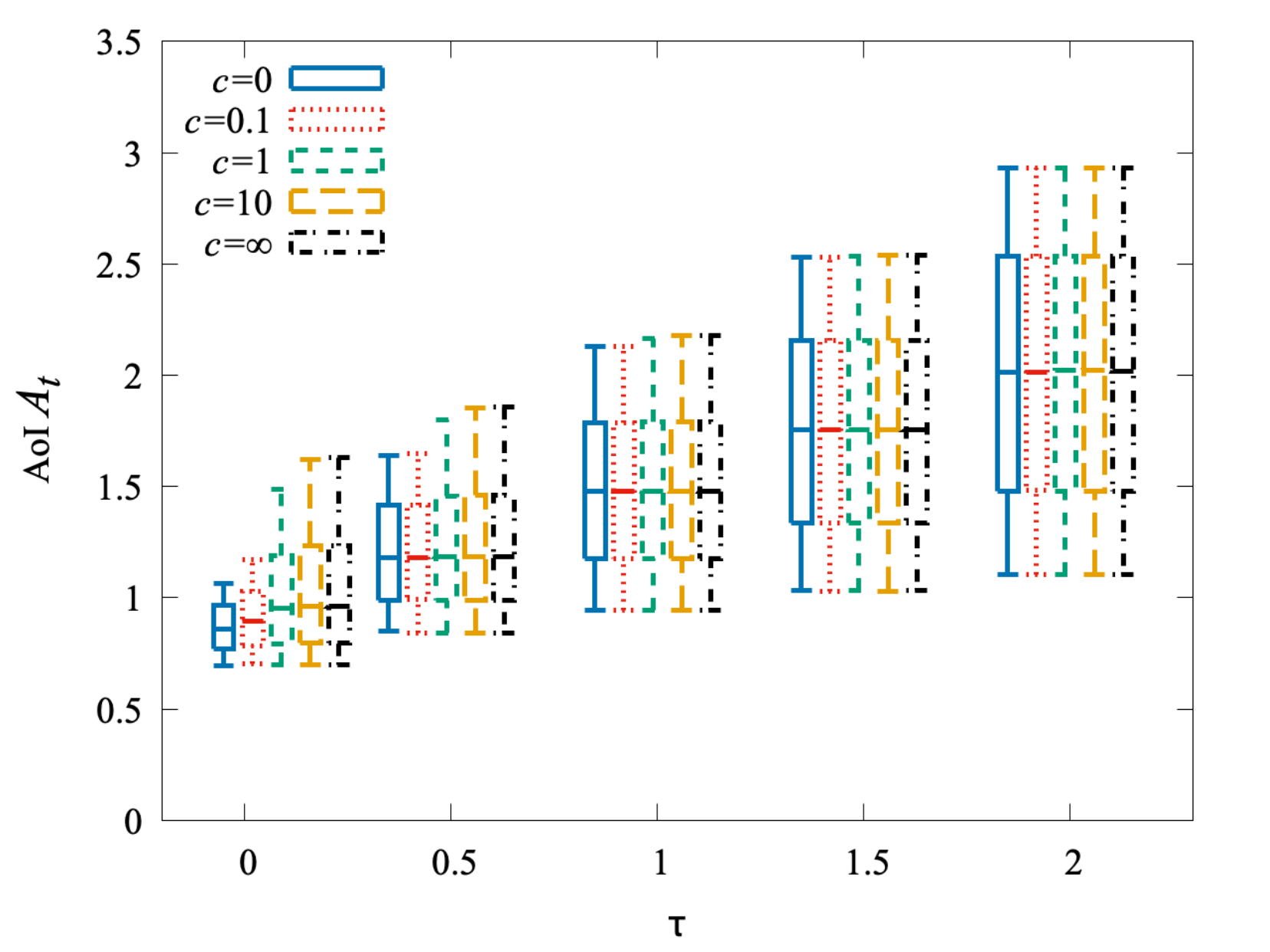} 
\subcaption{$s=0.75$}
\end{minipage}
\begin{minipage}[b]{0.49\linewidth}
\includegraphics[scale=0.3]{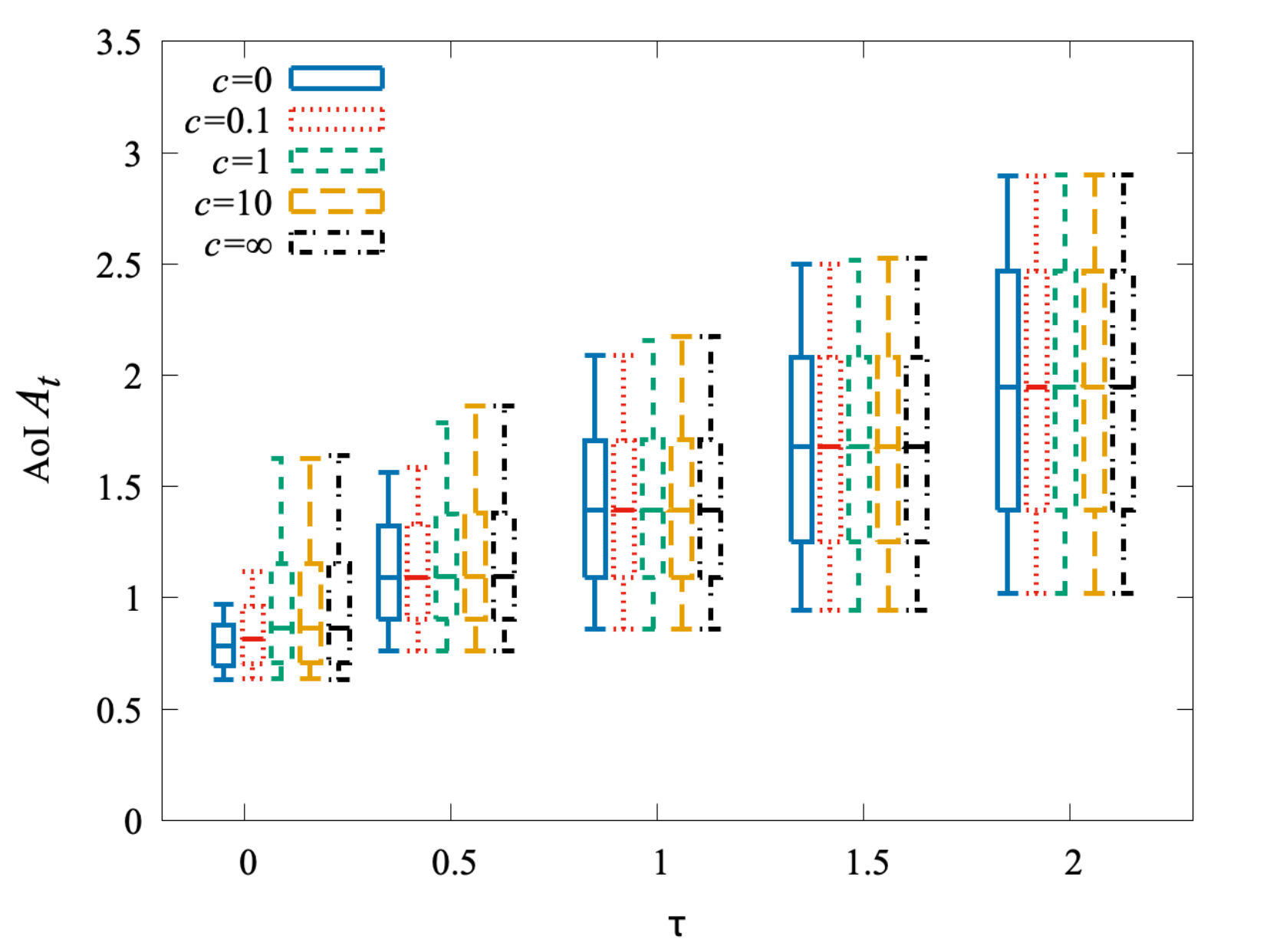} 
\subcaption{$s=1.25$}
\end{minipage}
\caption{Time-averaged distribution of the AoI for packet generation period $\tau$ in Example 2.}\label{fig:tau_vs_aoi4}
\mbox{}
\\
\centering
\begin{minipage}[b]{0.485\linewidth}
\includegraphics[scale=0.3]{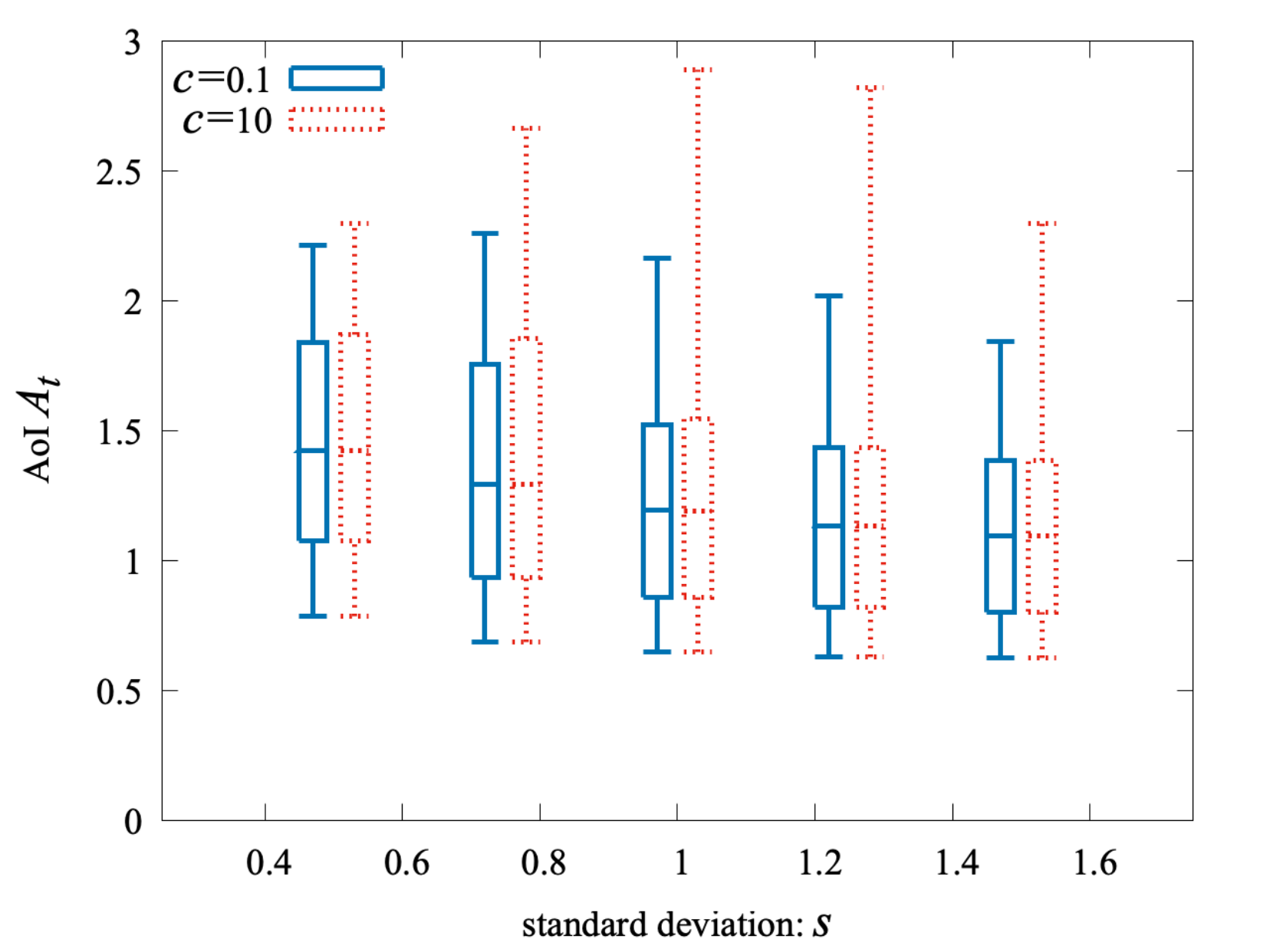} 
\subcaption{Example 1}
\end{minipage}
\begin{minipage}[b]{0.485\linewidth}
\includegraphics[scale=0.30]{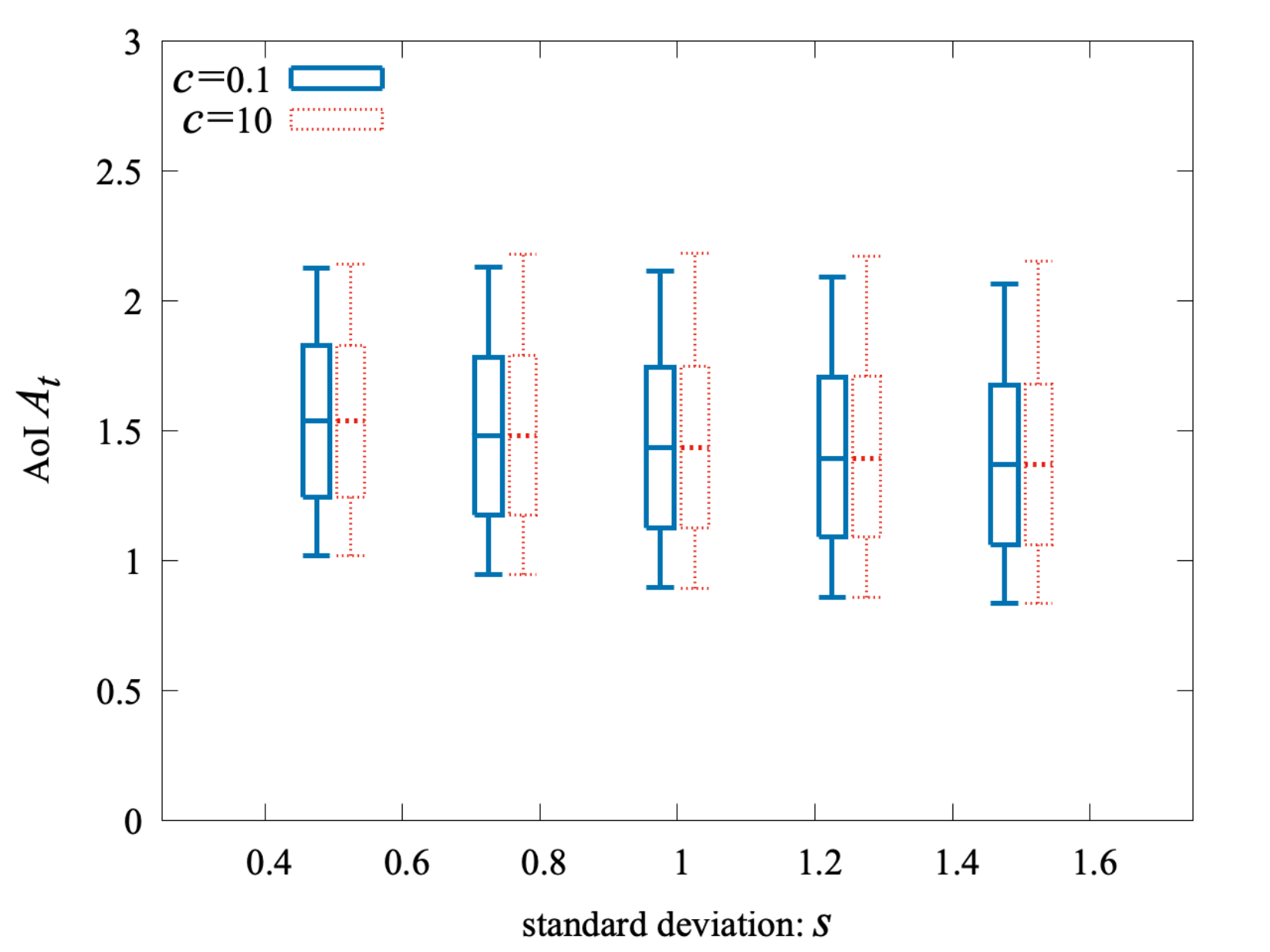} 
\subcaption{Example 2}
\end{minipage}
\caption{Time-averaged distribution of the AoI for standard deviation $s$. }\label{fig:std_vs_aoi2}
\end{figure*}

\section{Numerical Experiments}\label{sec:num}
In this section, we investigate the effect of the autocorrelation of
the virtual delay process $(X_t)_{t \ge 0}$ on the AoI through numerical experiments.
We focus on the case that $(Z_t)_{t \geq 0}$ follows an OU process,
and we consider two choices for the function $g$, as presented in
Examples 1 and 2.

The parameter values are set as follows: we fix the mean $\mu = 1$ and
the lower bound $x_{\min} = 0.5$ of the process $(X_t)_{t \ge 0}$, and
determine the parameters $\hat{\mu}$ and $\hat{s}$ of the function $g$
so that the standard deviation $s := \sqrt{\sigma(0)}$ of $(X_t)_{t
\ge 0}$ is equal to one of the values in $\{0.75, 1.25\}$.
Additionally, the value of $\kappa$ in the OU process is chosen to
satisfy $\sigma(c)/\sigma(0) = e^{-1}$ for a given time constant $c$. 

Figs.~\ref{fig:aoi_sim:ex1} and \ref{fig:aoi_sim:ex2} show the AoI
simulation results and the corresponding theoretical probability mass function (heat maps) obtained by (\ref{eq:AoI-CCDF-Gauss}) for
the case where $s = 0.75$, $c = 10$, and $\tau = 2.0$. The parameter values
$(\hat{\mu}, \hat{s}, \kappa)$ used in this setting are $(0.452,\
1.312,\ 0.081)$ in Example~1 and $(-1.282,\ 1.085,\ 0.066)$ in
Example~2.
In the left panels of Figs.~\ref{fig:aoi_sim:ex1} and
\ref{fig:aoi_sim:ex2}, we show overlaid sample paths from 500
simulations using the above parameter values. The right panels display
heat maps of the probability mass function, where the plotted value
corresponds to $\Pr(A_t > x) - \Pr(A_t > x + \delta)$, with $t$ on the
horizontal axis, $x$ on the vertical axis, and $\delta$ denoting the
step size parameter. In our numerical experiments, we fix $\delta=0.02$. 

The results show that the probability mass function obtained via numerical calculation accurately captures the region through which the AoI sample paths evolve. We observe that $A_t$ exhibits periodic behavior in $t$ with period $\tau = 2.0$. Moreover, the heat map of the probability mass function confirms that $\Pr(A_t > x)$ is a step function with respect to $x$.

We next examine the relationship between the packet generation
interval $\tau$ and the AoI distribution. 
Figs.~\ref{fig:tau_vs_heatmap3_1_A} and \ref{fig:tau_vs_heatmap3_1_B} show
heat maps of the probability mass function of the AoI distribution for
$s = 0.75$, while Figs.~\ref{fig:tau_vs_heatmap3_2_A} and \ref{fig:tau_vs_heatmap3_2_B} 
present the corresponding results for $s = 1.25$.
From
Figs.~\ref{fig:tau_vs_heatmap3_1_A} and \ref{fig:tau_vs_heatmap3_1_B}, we
observe that the rate of decay in
the tail probability of the AoI with decreasing $\tau$ becomes more
gradual as the autocorrelation of the virtual delay process increases
(i.e., as the time constant $c$ increases). A similar trend is
observed in Figs.~\ref{fig:tau_vs_heatmap3_2_A} and \ref{fig:tau_vs_heatmap3_2_B}.

Figs.~\ref{fig:tau_vs_aoi3} and \ref{fig:tau_vs_aoi4} 
show box plots of the AoI values corresponding to the 10th, 25th,
50th, 75th, and 90th percentiles (from bottom to top) for $s = 0.75$
and $s = 1.25$. These percentiles are computed based on the
time-averaged distribution of the AoI, given by $\frac{1}{\tau}
\int_x^{x+\tau} \Pr(A_t > x) \, \mathrm{d}t$, for time constants 
$c \in \{0, 0.1, 1, 10, \infty\}$ and packet generation intervals 
$\tau \in \{0.1, 0.5, 1.0, 1.5, 2.0\}$. 
The two extreme cases $c=0$ and $c=\infty$ are computed directly using (\ref{eq:AoI-CCDF-Gauss}): 
for $c=0$, it follows that $Z_0,Z_{\tau},...$ are i.i.d., while for $c=\infty$, we have $Z_{i\tau}=Z_0$ ($i=0,1,\ldots$).

As shown in Figs.~\ref{fig:tau_vs_aoi3} and \ref{fig:tau_vs_aoi4}, we observe that
(consistent with the trends in Figs.~\ref{fig:tau_vs_heatmap3_1_A}--\ref{fig:tau_vs_heatmap3_1_B})
the tail probabilities of the AoI decay more gradually as the time
constant $c$ increases. This effect is particularly pronounced when
$\tau$ is small, where the influence of $c$ on the AoI distribution is
clearly evident.
Conversely, when $\tau$ is large, the impact of $c$ on the AoI
distribution becomes negligible. In this regime, the delays $(D_n)_{n=0,1,\ldots}$ behave
approximately as independent and identically distributed random
variables. We also observe that the AoI distributions for $c = 0.1$
and $c = 10$ closely resemble those for $c = 0$ and $c = \infty$,
respectively.

The relationship between the standard deviation $s$ of the virtual
delay process $(X_t)_{t \ge 0}$ and the AoI distribution for $c\in\{0.1,10\}$, is
illustrated in Fig.~\ref{fig:std_vs_aoi2}. This figure shows that
when the autocorrelation of the virtual delay process is small, the
AoI tends to decrease as the standard deviation $s$ increases.
On the other hand, when the autocorrelation is large, the tail
probability of the AoI distribution decays more slowly. In such cases,
increasing the standard deviation does not necessarily lead to lower
AoI. Additionally, for a fixed value of $s$, we observe that higher
autocorrelation in the virtual delay process consistently results in
higher AoI.

\section{Conclusion}\label{sec:conc}
In this paper, we introduced a framework in which the sequence of
delays is constructed from a non-negative continuous-time stochastic
process, referred to as the virtual delay process, as a theoretical
model for analyzing the AoI.

We first derived an expression for the transient probability distribution of the AoI. Based on this expression, we applied stochastic ordering techniques to show that stronger dependence in the sequence of delays leads to degraded AoI performance. As a special case, we considered virtual delay processes composed from stationary Gaussian processes and developed a numerical method for computing the AoI distribution. Finally, we investigated the behavior of the model through numerical experiments.

As future work, we plan to explore efficient methods for computing the AoI distribution when $(Z_t)_{t \geq 0}$ follows a general Gaussian process. Further analysis of the autocorrelation structure of the AoI process will also be an important direction for study.

\appendices
\section{Computation of the AoI distribution for virtual delay
processes composed from the OU process}\label{appendix:OU-compute}

We define a conditional complementary CDF $\overline{\Psi}_n(y \mid a_0, a_1, \ldots, a_{n-1})$ 
and the corresponding density $\psi_n(y \mid a_0, a_1,\ldots, a_{n-1})$ as follows:
\begin{align*}
\overline{\Psi}_n(y \mid a_0, a_1, \ldots, a_{n-1})
=
\Pr\left(
Z_{n\tau} > y \mid \bigcap_{i=1}^{n}\{Z_{(n-i)\tau} > a_{n-i}\}
\right),
\end{align*}
\[
\psi_n(y \mid a_0, a_1,\ldots, a_{n-1})
:=
-\frac{\partial}{\partial y}\overline{\Psi}_n(y \mid a_0,a_1,\ldots, a_{n-1}), 
\]
By definition, we have 
\begin{align}
\Pr(Z_{n\tau} > a_n, Z_{(n-1)\tau} > a_{n-1}, \ldots, Z_0 > a_0)
&=
\Pr(Z_0 > a_0)
\prod_{i=1}^n \Pr\left(Z_{i\tau} > a_i \mid \bigcap_{j=1}^{i}\{Z_{(i-j)\tau} > a_{i-j}\}\right)
\nonumber
\\
&=
\Pr(Z_0 > a_0) 
\prod_{i=1}^n 
\overline{\Psi}_i(a_i \mid a_0, a_1, \ldots, a_{i-1}). 
\label{eq:joint_prob_OU}
\end{align}
From the Markov property of OU processes 
and (\ref{eq:OU-transient-density}), we have
\begin{align*}
\lefteqn{\overline{\Psi}_n(y \mid a_0, a_1, \ldots, a_{n-1})}\\
&=
\frac{1}{\overline{\Psi}_n(a_{n-1} \mid a_0, a_1, \ldots, a_{n-2})}
\int_{a_{n-1}}^{\infty}
\Pr\Biggl(
Z_{n\tau} > y \mid \{Z_{(i-1)\tau} = u\}
\cap \Biggl(\bigcap^{i}_{j=2}\{Z_{(i-j)\tau}> a_{i-j}\} \Biggr)
\Biggr)
\psi_n(u \mid a_0, a_1, \ldots, a_{n-2}) {\rm d}u
\\
&=
\frac{1}{\overline{\Psi}_n(a_{n-1} \mid a_0, a_1, \ldots, a_{n-2})}
\int_{a_{n-1}}^{\infty}
\Pr(Z_{n\tau} > y \mid Z_{(i-1)\tau} = u)
\psi_n(u \mid a_0, a_1, \ldots, a_{n-2}) {\rm d}u
\\
&=
\frac{1}{\overline{\Psi}_n(a_{n-1} \mid a_0, a_1, \ldots, a_{n-2})}
\int_{u=a_{n-1}}^{\infty}
\left(
\int_{x=y}^{\infty} 
f_{N(ze^{-\kappa u},1-e^{-2\kappa u})} {\rm d}x 
\right)
\psi_n(u \mid a_0, a_1, \ldots, a_{n-2}) {\rm d}u.
\end{align*}
We then obtain the following recursions for $\psi_n(y \mid a_0, a_1,\ldots, a_{n-1})$ and $\overline{\Psi}_n(y \mid a_0, a_1,\ldots, a_{n-1})$:

\begin{align*}
&\psi_1(y \mid a_0)
=
\int_{u=a_0}^{\infty}
f_{N(ze^{-\kappa u},1-e^{-2\kappa u})}(y)
\cdot
f_{N(0, 1)}(u) {\rm d}u,
\\
&\psi_n(y \mid a_0, a_1, \ldots, a_{n-1})
\\
&\qquad{}=
\frac{1}{\overline{\Psi}_n(a_{n-1} \mid a_0, a_1, \ldots, a_{n-2})}
\int_{u=a_{n-1}}^{\infty}
f_{N(ze^{-\kappa u},1-e^{-2\kappa u})}(y)
\psi_n(u \mid a_0, a_1, \ldots, a_{n-2}) {\rm d}u,
\;\;
n=2,3,\ldots,
\\
&\overline{\Psi}_n(y \mid a_0, a_1, \ldots, a_{n-1})
=
1 - \int_0^y \psi_n(y \mid a_0, a_1, \ldots, a_{n-1}),
\quad
n = 0,1,\ldots. 
\end{align*}

Therefore, using these recursions and (\ref{eq:joint_prob_OU}), 
the $(k_t-\theta_t(x))$-fold integral arising 
from (\ref{eq:AoI-CCDF-Gauss}) can be reduced to $2(k_t-\theta_t(x))$
simple integrals, which leads to a significant reduction in the amount
of computation.

\end{document}